\documentclass[11pt,dvipsnames,letterpaper]{article}

\usepackage{privacy}

\begin{document}

\title{Private Frequency Estimation via Projective Geometry}
\author{Vitaly Feldman\thanks{Apple Inc.} \and Jelani Nelson\thanks{UC Berkeley. \texttt{minilek@berkeley.edu}. Supported by NSF grant CCF-1951384, ONR grant N00014-18-1-2562, and ONR DORECG award N00014-17-1-2127.} \and Huy L. Nguyen\thanks{Northeastern University. \texttt{hu.nguyen@northeastern.edu}. Supported in part by NSF CAREER grant CCF-1750716 and NSF grant CCF-1909314.} \and Kunal Talwar\thanks{Apple Inc. \texttt{ktalwar@apple.com}.}}
\date{\today}
\maketitle

\begin{abstract}
  In  this work, we propose a new algorithm \projectivegeometry (\pg) for locally differentially private (LDP) frequency estimation. For a universe size of $k$ and with $n$ users, our $\eps$-LDP algorithm has communication cost $\lceil\log_2k\rceil$ bits in the private coin setting and $\eps\log_2 e + O(1)$ in the public coin setting, and has computation cost $O(n + k\exp(\eps) \log k)$ for the server to approximately reconstruct the frequency histogram, while achieving the state-of-the-art privacy-utility tradeoff. In many parameter settings used in practice this is a significant improvement over the $ O(n+k^2)$ computation cost that is achieved by the recent \pirappor algorithm (Feldman and Talwar; 2021). Our empirical evaluation shows a speedup of over 50x over \pirappor while using approximately 75x less memory for practically relevant parameter settings. In addition, the running time of our algorithm is within an order of magnitude of \hadamardresponse (Acharya, Sun, and Zhang; 2019) and \recursivehadamardresponse (Chen, Kairouz, and Ozgur; 2020) which have significantly worse reconstruction error. The error of our algorithm essentially matches that of the communication- and time-inefficient but utility-optimal \subsetselection (\SS) algorithm (Ye and Barg; 2017).   
  Our new algorithm is based on using Projective Planes over a finite field to define a small collection of sets that are close to being pairwise independent and a dynamic programming algorithm for approximate histogram reconstruction on the server side.
  We also give an extension of \pg, which we call \hybridprojectivegeometry, that allows trading off computation time with utility smoothly.
\end{abstract}

\section{Introduction}

In the so-called {\it federated} setting, user data is distributed over many devices which each communicate to some central server, after some local processing, for downstream analytics and/or machine learning tasks. We desire such schemes which (1) minimize communication cost, (2) maintain privacy of the user data while still providing utility to the server, and (3) support efficient algorithms for the server to extract knowledge from messages sent by the devices. Such settings have found applications to training language models for such applications as autocomplete and spellcheck, and other analytics applications in Apple iOS \cite{ThakurtaVV+17} and analytics on settings in Google Chrome \cite{ErlingssonPK14}.

The gold standard for protecting privacy is for a scheme to satisfy {\it differential privacy}. In the so-called {\it local model} that is relevant to the federated setting, there are $n$ users with each user $i$ holding some data $d_i\in\mathcal D$. Each user then uses its own private randomness $r_i$ and data $d_i$ to run a {\it local randomizer} algorithm that produces a random message $M_i$ to send to the server.  We say the scheme is {\it $\eps$-differentially private} if for all users $i$, any possible message $m$, and any $d\neq d'$,
$$
\Pr(M_i = m | d_i = d) \le e^\eps \Pr(M_i = m | d_i = d') .
$$
Note a user could simply send an unambiguous encoding of $d_i$, which allows the server to learn $d_i$ exactly (perfect utility), but privacy is not preserved; such a scheme does not preserve $\eps$-DP for any finite $\eps$. On the opposite extreme, the user could simply send a uniformly random message that is independent of $d_i$, which provides zero utility but perfect privacy ($\eps = 0$). One can hope to develop schemes that smoothly increase utility by relaxing privacy (i.e., by increasing $\eps$).

This work addresses the problem of designing efficient schemes for locally differentially private frequency estimation. In this problem, one defines a histogram $x\in\R^{k}$ where $x_d$ is the number of users $i$ with $d_i = d$, and $k = |\mathcal D|$. From the $n$ randomized messages it receives, the server would like to approximately reconstruct the histogram, i.e., compute some $\tilde x$ such that $\|x - \tilde x\|$ is small with good probability over the randomness $r = (r_1,\ldots,r_n)$, for some norm $\|\cdot\|$. Our goal is to design schemes that obtain the best-known privacy-utility trade-offs, while being efficient in terms of communication, computation time, and memory. In this work we measure {\it utility loss} as the mean squared error (MSE) $\E_r\frac 1k[\|x - \tilde x\|_2^2]$, with lower MSE yielding higher utility. Note that such a scheme should specify both the local randomizer employed by users, and the reconstruction algorithm used by the server.

\begin{table*}[!h]
  \begin{center}
  \begin{tabular}{|c|c|c|c|}
    \hline
    \textbf{scheme name} & \textbf{communication} & \textbf{utility loss} & \textbf{server time}\\
    \hline
    \randomizedresponse&$\lceil \log_2 k\rceil$&$\frac{n(2e^\eps + k)}{(e^\eps - 1)^2}$&$n+k$\\
    \hline
    \rappor \cite{ErlingssonPK14}& $k$ & $\frac{4ne^{\eps}}{(e^{\eps} - 1)^2}$& $nk$\\
    \hline
    \subsetselection \cite{YeB17,WangHNZWXY19}&$\frac k{e^\eps}(\eps + O(1))$&$\frac{4ne^{\eps}}{(e^{\eps} - 1)^2}$&$ n\frac{k}{e^\eps}$\\
    \hline
    \pirappor \cite{FeldmanT21}& $\lceil\log_2 k\rceil + O(\eps)$&$\frac{4ne^{\eps}}{(e^{\eps} - 1)^2}$& $\min(n + k^2, n \frac{k}{e^\eps})$, or\\
    &&&$n + ke^{2\eps}\log k$ ({\it this work})\\
    \hline
    \hadamardresponse \cite{AcharyaSZ19}&$\lceil \log_2 k\rceil$&$\frac{36ne^{\eps}}{(e^{\eps} - 1)^2}$&$n + k\log k$\\
    \hline
    \recursivehadamardresponse \cite{ChenKO20}&$\lceil \log_2 k\rceil$
    &$\frac{8ne^{\eps}}{(e^{\eps} - 1)^2}$
    &$n+k\log k$\\
    \hline
    \hline
    \projectivegeometry &$\lceil\log_2 k\rceil$&$\frac{4ne^{\eps}}{(e^{\eps} - 1)^2}$&$n + ke^\eps\log k$\\
    \hline
    \hybridprojectivegeometry  &$\lceil\log_2 k\rceil$&$(1+\frac{1}{q-1})\frac{4ne^{\eps}}{(e^{\eps} - 1)^2}$&$n + kq\log k$\\
    \hline
  \end{tabular}
  \caption{Known local-DP schemes for private frequency estimation compared with ours. Utility bounds are given up to $1+o_k(1)$ multiplicative accuracy for ease of display and running times are asymptotic. For brevity we only state bounds for $\eps \leq \log k$. Some of algorithms assume $k$ is either a power of $2$ or some other prime power and otherwise potentially worsen in some parameters due to round-up issues; we ignore this issue in the table. The communication and server time for \rappor are random variables which are never more than $k$ and $nk$, respectively, but \rappor can be implemented so that in expectation the communication and runtimes are asymptotically equal to \subsetselection. For \hybridprojectivegeometry, $q$ can be chosen as any prime in $[2,\exp(\eps)+1]$. The utility loss here is the proven upper bound on the variance for \pg, \hr and \rhr, and the analytic expression for the variance for the others. The communication bounds are in the setting of private coin protocols. As with \rhr, \pg and \hpg can also both achieve improved communication in the public coin model; see \cref{sec:pub-coin}.}\label{fig:algs}
  \end{center}
\end{table*}

There are several known algorithms for this problem; see \cref{fig:algs}. To summarize, the best known utility in prior work is achieved by \textsf{SubsetSelection} and slightly worse utility is achieved by the \textsf{RAPPOR} algorithm \cite{ErlingssonPK14} that is based on the classical binary randomized response \cite{Warner65}.  Unfortunately, both RAPPOR and Subset Selection have very high communication cost of $\approx k H(1/(e^\eps+1))$, where $H$ is the binary entropy function and server-side running time of $\tilde O(n k/\exp(\eps))$.  Large $k$ is common in practice, e.g., $k$ may be the size of a lexicon when estimating word frequencies to train language models. This has led to numerous and still ongoing efforts to design low-communication protocols for the problem \cite{hsu2012distributed,ErlingssonPK14,bassily2015local,kairouz2016discrete,WangHNZWXY19,WangBLJ:17,YeB17,AcharyaSZ19,bun2019heavy,bassily2020practical,ChenKO20,FeldmanT21,ShahCBKT21}.
 
 One simple approach to achieve low communication and computational complexity is to use a simple $k$-ary \textsf{RandomizedResponse} algorithm (e.g.~\cite{WangBLJ:17}). Unfortunately, its utility loss is suboptimal by up to an $\Omega(k/e^\eps)$ factor; recall $k$ is often large and $\eps$ is at most a small constant, and thus this represents a large increase in utility loss. In the $\eps < 1$ regime asymptotically optimal utility bounds are known to be achievable with low communication and computational costs \cite{bassily2015local,bun2019heavy,bassily2020practical}. 
 The first low-communication algorithm that achieves asymptotically optimal bounds in the $\eps > 1$ regime is given in \cite{WangBLJ:17}. It communicates $O(\eps)$ bits and relies on shared randomness. However, it matches the bounds achieved by \textsf{RAPPOR} only when $e^\eps$ is an integer and its computational cost is still very high and comparable to that of RAPPOR. Two algorithms, \textsf{HadamardResponse}~\cite{AcharyaSZ19} and \textsf{RecursiveHadamardResponse}~\cite{ChenKO20}, show that it is possible to achieve low communication, efficient computation (only $\Theta(\log k)$ slower than \textsf{RandomizedResponse}) and {\it asymptotically} optimal utility. However, their utility loss in practice is suboptimal by a constant factor (e.g.~our experiments show that these algorithms have an MSE that is  over $2\times$ higher for $\eps =5$ than \textsf{SubsetSelection}; see \cref{fig:experiments}).
 
Recent work of Feldman and Talwar \cite{FeldmanT21} describes a general technique for reducing communication of a local randomizer without sacrificing utility and, in particular, derives a new low communication algorithm for private frequency estimation via pairwise independent derandomization of \textsf{RAPPOR}. Their algorithm, referred to as \textsf{PI-RAPPOR}, achieves the same utility loss as \textsf{RAPPOR} and has the server-side running time of  $\tilde O(\min(n + k^2, nk/\exp(\eps)))$. The running time of this algorithm is still prohibitively high when both $n$ and $k$ are large. 

We remark that while goals (1)-(3) from the beginning of this section are all important, goal (2) of achieving a good privacy/utility tradeoff is unique in that poor performance cannot be mitigated by devoting more computational resources (more parallelism, better hardware, increased bandwidth, etc.). After deciding upon a required level of privacy $\eps$, there is a fundamental limit as to how much utility can be extracted given that level of privacy; our goal in this work is to understand whether that limit can be attained in a communication- and computation-efficient way.

\paragraph{Our main contributions.} We give a new private frequency estimation algorithm \projectivegeometry (\pg) that maintains the best known utility and low communication while significantly improving computational efficiency amongst algorithms with similarly good utility. Using our ideas, we additionally give a new reconstruction algorithm that can be used with the \pirappor mechanism to speed up its runtime from $O(k^2/\exp(\eps))$ to $O(k\exp(2\eps)\log k)$ (albeit, this runtime is still slower than \pg's reconstruction algorithm by an $\exp(\eps)$ factor). We also show a general approach that can further improve the server-side runtime at the cost of slightly higher reconstruction error, giving a smooth tradeoff: for any prime $2\le q \le \exp(\eps)+1$, we can get running time $O(n+ q k\log k)$ with error only $(1+1/(q-1))$ times larger than the best known bound\footnote{For both \pg and \hpg we have stated runtime bounds assuming that certain quantities involving $k, \exp(\eps)$ are prime powers. If this is not the case, runtimes may increase by a factor of $\exp(\eps)$ for \pg, or $q$ for \hpg; we note that \pirappor also has this feature.}. 
Note that for $q=2$ we recover the bounds achieved by \hr and \rhr. Our mechanisms require $\lceil \log_2 k\rceil$ per device in the private coin model, or $\eps\log_2 e + O(1)$ bits in the public coin model (see \cref{sec:pub-coin}). As in previous work, our approximate reconstruction algorithm for the server is also parallelizable, supporting linear speedup for any number of processors $P\le \min\{n, k\exp(\eps)\}$. 
We also perform an empirical evaluation of our algorithms and prior work and show that indeed the error of our algorithm matches the state of the art will still being time-efficient. 

\medskip

As has been observed in previous work~\cite{AcharyaSZ19}, the problem of designing a local randomizer is closely related to the question of existence of set systems consisting of sets of density $\approx \exp(-\eps)$ which are highly symmetric, and do not have positive pairwise dependencies. The size of the set system then determines the communication cost, and its structural properties may allow for efficient decoding. We show that projective planes over finite fields give us set systems with the desired properties, leading to low communication and state-of-the-art utility. We also show a novel dynamic programming algorithm that allows us to achieve server runtime that is not much worse than the fastest known algorithms.

As in a lot of recent work on this problem, we have concentrated on the setting of moderately large values for the local privacy parameter $\eps$. This is a setting of interest due to recent work in privacy amplification by shuffling~\cite{BittauEMMR17, CheuSUZZ19, ErlingssonFMRTT19, BalleBGN19, FeldmanMcTa20} that shows that local DP responses, when shuffled across a number of users so that the server does not know which user sent which messages, satisfy a much stronger central privacy guarantee. Asymptotically, $\eps$-DP local randomizers aggregated over $n$ users satisfy $(O(\sqrt{\frac{e^{\eps}\ln \frac 1 \delta}{n}}), \delta)$-DP. The hidden constants here are small: as an example with $n=10,000$ and $\eps=6$, shuffling gives a central DP guarantee of $(0.3, 10^{-6})$-DP. This motivation from shuffling is also the reason why our work concentrates on the setting of private coin protocols, as shared randomness seems to be incompatible with shuffling of private reports.
We note that while constant factors improvement in error may seem small, these algorithm are typically used for discovering frequent items from power law distributions. A constant factor reduction in variance of estimating any particular item frequency then translates to a corresponding smaller noise floor (for a fixed false positive rate, say), which then translates to a constant factor more items being discovered.

\subsection{Related Work}
\label{sec:related}
A closely related problem is finding ``heavy hitters'', namely all elements $j \in [k]$ with counts higher than some given threshold; equivalently, one wants to recover an approximate histogram $\tilde x$ such that $\|x - \tilde x\|_\infty$ is small (the non-heavy hitters $i$ can simply be approximated by $\tilde x_i = 0$). In this problem the goal is to avoid linear runtime dependence on $k$ that would result from doing frequency estimation and then checking all the estimates. This problem is typically solved using a ``frequency oracle'' which is an algorithm that for a given $j\in [k]$ returns an estimate of the number of $j$'s held by users (typically without computing the entire histogram) \cite{bassily2015local,bassily2020practical,bun2019heavy}. Frequency estimation is also closely related to the discrete distribution estimation problem in which inputs are sampled from some distribution over $[k]$ and the goal is to estimate the distribution \cite{YeB17,AcharyaSZ19}. Indeed, bounds for frequency estimation can be translated directly to bounds on distribution estimation by adding the sampling error. We note that even for the problem of implementing a private frequency oracle, our \pg scheme supports answering queries faster than \pirappor by factor of $\Theta(\exp(\eps))$.

\global\long\def\Ex{\mathbb{E}}%
\global\long\def\one{\mathbf{1}}%
\global\long\def\K{\mathcal{K}}%
\global\long\def\F{\mathcal{\mathbb{F}}}%
\global\long\def\R{\mathbb{R}}%
\global\long\def\Z{\mathbb{Z}}%


\section{Preliminaries}

Our mechanisms are based on projective spaces, and below we review
some basic definitions and constructions of such spaces from standard vector spaces.
\begin{definition}
For a given vector space $V$, the projective space $P\left(V\right)$
is the set of equivalence classes of $V\setminus\left\{ 0\right\} $,
where $0$ denotes the zero vector, under the following equivalence
relation: $x\sim y$ iff $x=cy$ for some scalar $c$. Each equivalence
class is called a \emph{(projective) ``point''} of the projective
space. Let $p:V\setminus\left\{ 0\right\} \to P\left(V\right)$ be
the mapping from each vector $v\in V$ to its equivalence class. If
$V$ has dimension $t$ then $P(V)$ has dimension $t-1$.
\end{definition}

We will also use subspaces of the projective space $P\left(V\right)$.
\begin{definition}
A projective subspace $W$ of $P\left(V\right)$ is a subset of $P\left(V\right)$
such that there is a subspace $U$ of $V$ where $p\left(U\setminus\left\{ 0\right\} \right)=W$.
If $U$ has dimension $t$ then $W$ has dimension $t-1$.
\end{definition}

It should be noted that intersections of projective subspaces are
projective subspaces. Let $q$ be a prime power and $\F_{q}^{t}$
the $t$-dimensional vector space over the field $\F_{q}$. We will
work with $P\left(\F_{q}^{t}\right)$ and its subspaces.
\begin{definition}
A vector $x\in\F_{q}^{t}$ is called canonical if its first non-zero
coordinate is $1$.
\end{definition}

Each equivalence class can be specified by its unique canonical member.

\section{\projectivegeometry description and analysis}\label{sec:basic-construction}

Our \pg scheme is an instantiation of the framework due to~\cite{AcharyaSZ19}.
In their framework, the local randomizer is implemented as follows. There is a 
universe $U$ of outputs and each input $v$ corresponds to a subset $S(v)$ of outputs.
All the subsets $S(v)$ for different values of $v$ have the same size.
Given the input $v$, the local randomizer returns a uniformly random element of
$S(v)$ with probability $\frac{e^{\varepsilon}|S(v)|}{|S(v)|e^{\varepsilon}+|U|-|S(v)|}$
and a uniformly random element of $U\setminus S(v)$ with probability 
$\frac{|U|-|S(v)|}{|S(v)|e^{\varepsilon}+|U|-|S(v)|}$. The crux of the construction is
in specifying the universe $U$ and the subsets $S(v)$.

\pg works for $k=\frac{q^t-1}{q-1}$ for some integer $t$ (other
values of $k$ need to be rounded up to the nearest such value). We identify the $k$
input values with $k$ canonical vectors in $\F_{q}^{t}$ and the corresponding 
projective points in $P\left(\F_{q}^{t}\right)$.
We also identify the output values with projective points in $P\left(\F_{q}^{t}\right)$.
The subsets $S(v)$ are the $(t-2)$-dimensional projective subspaces of $P\left(\F_{q}^{t}\right)$.
There are $\frac{q^{t}-1}{q-1}$ $(t-2)$-dimensional projective subspaces, which is the
same as the number of projective points. For a canonical
vector $v$, the set $S(v)$ is the $(t-2)$-dimensional projective subspace
 such that for all $u\in p^{-1}(S(v))$, we have $\left\langle u,v\right\rangle =0$.
Each $(t-2)$-dimensional projective subspace contains
$\frac{q^{t-1}-1}{q-1}$ projective points. In other words, each set $S(v)$ contains 
$\frac{q^{t-1}-1}{q-1}$ messages out of the universe of $\frac{q^{t}-1}{q-1}$ messages.

An important property of the construction is the symmetry among the intersections of any two 
subsets $S(v)$.
\begin{claim}
Consider a $t$-dimensional vector space $V$.
The intersection of any two 
$(t-2)$-dimensional projective subspaces of $P\left(V\right)$ is a $(t-3)$-dimensional projective 
subspace.
\end{claim}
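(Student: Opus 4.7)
The plan is to reduce everything to the dimension formula for vector subspaces, using the correspondence between projective subspaces of $P(V)$ and linear subspaces of $V$ given in the definition. I will implicitly assume the two $(t-2)$-dimensional projective subspaces are distinct (otherwise their intersection equals either one of them and has dimension $t-2$, not $t-3$); the interesting content of the claim is the distinct case.

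First, I would translate to linear algebra. Let $W_1, W_2 \subseteq P(V)$ be the two projective subspaces, and let $U_1, U_2 \subseteq V$ be linear subspaces with $p(U_i \setminus \{0\}) = W_i$. By the stated definitions, $\dim U_i = (t-2)+1 = t-1$. I would then verify the elementary fact that $p((U_1 \cap U_2) \setminus \{0\}) = W_1 \cap W_2$: this follows because $p$ identifies vectors in the same $1$-dimensional subspace, and a scalar multiple of a nonzero vector lies in $U_1 \cap U_2$ iff the vector itself does.

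Next, I would apply the standard dimension formula
\[
\dim(U_1 + U_2) + \dim(U_1 \cap U_2) = \dim U_1 + \dim U_2 = 2(t-1).
\]
Since $W_1 \neq W_2$, we have $U_1 \neq U_2$, and $U_1 + U_2$ strictly contains $U_1$; thus $\dim(U_1+U_2) \geq t$, and since $U_1+U_2 \subseteq V$, we get $\dim(U_1+U_2) = t$. Therefore $\dim(U_1 \cap U_2) = t-2$.

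Finally, I would invoke the definition of the dimension of a projective subspace once more: the projective subspace corresponding to a $(t-2)$-dimensional linear subspace has projective dimension $(t-2)-1 = t-3$, giving the claim. The main ``obstacle'' is really just the bookkeeping between projective and affine dimensions together with the mild observation that $p$ commutes with intersections on nonzero vectors; once that is spelled out, everything reduces to the dimension formula.
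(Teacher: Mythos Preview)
Your argument is correct. It is, however, a different route from the paper's. The paper argues by contradiction: it assumes the projective intersection has dimension $d-1<t-3$, extends a basis of $p^{-1}(I)\cup\{0\}$ to bases of $p^{-1}(S_1)\cup\{0\}$ and $p^{-1}(S_2)\cup\{0\}$, counts that the union of these vectors has cardinality $d+2(t-1-d)>t$, and extracts from the resulting linear dependence a nonzero vector in $p^{-1}(S_1)\cap p^{-1}(S_2)$ outside $p^{-1}(I)$. In effect the paper re-derives by hand the inequality $\dim(U_1\cap U_2)\ge \dim U_1+\dim U_2-\dim V$ for this special case. Your proof simply invokes the dimension formula $\dim(U_1+U_2)+\dim(U_1\cap U_2)=\dim U_1+\dim U_2$ together with $U_1+U_2=V$, which is shorter and more transparent; it also makes explicit the distinctness hypothesis and the fact that $p$ commutes with intersections, both of which the paper leaves implicit. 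The paper's version has the minor advantage of being self-contained from first principles, but your approach is the standard and cleaner one.
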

\begin{proof}
Let $I$ be the intersection of two projective subspaces $S_1$ and $S_2$. Recall that $I$, $S_1$,
$S_2$ are projective subspaces corresponding to subspaces of $V$.
Assume for contradiction that the
dimension $d-1$ of the intersection $I$ is lower than $t-3$. Starting
from a basis $v_{1},\ldots,v_{d}$ of $p^{-1}(I)\cup\left\{ 0\right\}$,
we can extend it with $u_{1},\ldots,u_{t-1-d}$ to form a basis of
the subspace $p^{-1}(S_1)\cup\left\{ 0\right\}$. We can
also extend $v_{1},\ldots,v_{d}$ with $w_{1},\ldots,w_{t-1-d}$ to
form a basis of $p^{-1}(S_2)\cup\left\{ 0\right\}$. 
Because $d+2(t-1-d)=t+(t-2-d)>t$, the collection of vectors
$v_{1},\ldots,v_{d},u_{1},\ldots,u_{t-1-d},w_{1},\ldots,w_{t-1-d}$
must be linearly dependent. There must exist nonzero coefficients
so that $\sum_{i}\alpha_{i}v_{i}+\sum_{j}\beta_{j}u_{j}+\sum_{k}\gamma_{k}w_{k}=0$.
This means $\sum_{k}\gamma_{k}w_{k}=-\sum_{i}\alpha_{i}v_{i}-\sum_{j}\beta_{j}u_{j}$
is a non-zero vector in $p^{-1}(S_1)\cap p^{-1}(S_2)$
but it is not in $p^{-1}(I)$, which is a contradiction.
\end{proof}

To ease the presentation we define $c_{int}=\frac{q^{t-2}-1}{q-1}$ 
to be the size of the intersection of two subsets $S(v)$ and let
$c_{set}=\frac{q^{t-1}-1}{q-1}$ denote the size of each subset $S(v)$.
Notice that $c_{set}^{2}\ge k\cdot c_{int}$
i.e. $(c_{set}/c_{int})^{2}\ge k/c_{int}$.

Each user with input $v$ sends
a projective point $e$ with probability $e^{\varepsilon}p$ if $e$
is in $S(v)$ and probability $p$ otherwise. We have

\begin{align*}
e^{\varepsilon}pc_{set}+p(k-c_{set}) & =1,\\
\mbox{ so that }p & =\frac{1}{\left(e^{\varepsilon}-1\right)c_{set}+k}.
\end{align*}

The server keeps the counts on the received projective points in a
vector $y\in \Z^k$. Thus, the total server storage is $O\left(k\right)$. We estimate
$x_{v}$ by computing
\[
\tilde{x}_{v}=\alpha\left(\sum_{u\in S_{v}}y_{u}\right)+\beta\sum_{u}y_{u}
\]
where $\alpha$ and $\beta$ are chosen so that it is an unbiased estimator. Note $\sum_u y_u = n$. We would like $\E \tilde x_v = x_v$ for all $v$. Notice that by linearity of expectation, it suffices to focus on the contribution to $\tilde x_v$ from a single user.

If that user's input is $v$, the expectation of the sum $Q:=\sum_{u\in S_{v}}y_{u}$ is $e^{\varepsilon}pc_{set}$. On the other hand, if the input is not $v$, the expectation of the sum $\sum_{u\in S_{v}}y_{u}$ is $e^{\varepsilon}pc_{int}+p\left(c_{set}-c_{int}\right)$.
We want $\alpha\cdot\Ex\left[Q\right]+\beta=[[input\ is\ v]]$, where $[[T]]$ is defined to be $1$ if $T$ is true and $0$ if false. Thus,
\begin{align*}
\alpha e^{\varepsilon}pc_{set}+\beta & =1,\\
\mbox{ and } \alpha p\left(\left(e^{\varepsilon}-1\right)c_{int}+c_{set}\right)+\beta & =0.
\end{align*}

Substituting $p$, we get
\begin{align*}
\alpha e^{\varepsilon}\frac{c_{set}}{\left(e^{\varepsilon}-1\right)c_{set}+k}+\beta & =1\\
\alpha\frac{(e^{\varepsilon}-1)c_{int}+c_{set}}{\left(e^{\varepsilon}-1\right)c_{set}+k}+\beta & =0
\end{align*}

Solving for $\alpha, \beta$, we get
\begin{align*}
\alpha & =\frac{\left(e^{\varepsilon}-1\right)c_{set}+k}{\left(e^{\varepsilon}-1\right)(c_{set}-c_{int})};\\
\beta & =-\frac{\left(e^{\varepsilon}-1\right)c_{set}+k}{\left(e^{\varepsilon}-1\right)(c_{set}-c_{int})}\cdot\frac{(e^{\varepsilon}-1)c_{int}+c_{set}}{\left(e^{\varepsilon}-1\right)c_{set}+k}=-\frac{(e^{\varepsilon}-1)c_{int}+c_{set}}{\left(e^{\varepsilon}-1\right)(c_{set}-c_{int})}.
\end{align*}

We next analyze the variance, which suggests that $q$ should be chosen close to $\exp(\eps) + 1$ for the best utility.

\begin{lemma}\label{lem:pg-var}
$\Ex\left[\left\Vert x - \tilde x\right\Vert_2 ^{2}\right]\le\frac{ne^{\varepsilon}c_{set}^2/c_{int}^2+n(k-1)\left((e^{\varepsilon}-1)+c_{set}/c_{int}\right)^{2}}{\left(e^{\varepsilon}-1\right)^{2}(c_{set}/c_{int}-1)}$.
In particular, if $c_{set}/c_{int}=e^{\varepsilon}+1$ then
$\Ex\left[\frac 1k\left\Vert x-\tilde{x}\right\Vert_2 ^{2}\right]\le \frac nk+\frac{4ne^{\varepsilon}}{\left(e^{\varepsilon}-1\right)^{2}}$
\end{lemma}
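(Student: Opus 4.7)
The approach is to express the mean-squared error as a sum of per-coordinate variances, decompose each variance across independent users, and then apply the projective-geometry identity $c_{set}^2 \ge k \cdot c_{int}$ noted above to obtain a clean upper bound.

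First I would observe that $\alpha,\beta$ were chosen so that $\tilde x_v$ is unbiased per coordinate, hence $\Ex\|x-\tilde x\|_2^2 = \sum_v \mathrm{Var}(\tilde x_v) = \alpha^2 \sum_v \mathrm{Var}(Q_v)$, where $Q_v := \sum_{u\in S_v} y_u$ (the deterministic term $\beta\sum_u y_u = \beta n$ contributes no variance). I would then write $Q_v$ as a sum of independent Bernoulli indicators $\mathbf{1}[M_i \in S_v]$ whose parameter equals $q_1 := e^{\varepsilon} p\, c_{set}$ if $v_i = v$ and $q_0 := p((e^\varepsilon-1)c_{int} + c_{set})$ otherwise. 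Summing Bernoulli variances $q_i(1-q_i)$ across users and then across $v\in[k]$, using $\sum_v x_v = n$, collapses the total cleanly to $n\, q_1(1-q_1) + n(k-1)\, q_0(1-q_0)$.

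The central step is a sharp bound on the two complements. From the normalization $e^\varepsilon p c_{set} + p(k-c_{set}) = 1$, we have $1-q_1 = p(k-c_{set})$ and $1-q_0 = p[(e^\varepsilon-1)(c_{set}-c_{int}) + (k-c_{set})]$. I would then invoke $c_{set}^2 \ge k\cdot c_{int}$, rearranged as $k-c_{set} \le c_{set}(c_{set}-c_{int})/c_{int}$, to pull a common factor $(c_{set}-c_{int})$ out of each complement, yielding $q_1(1-q_1) \le p^2 e^\varepsilon c_{set}^2 (c_{set}-c_{int})/c_{int}$ and $q_0(1-q_0) \le p^2 c_{int}(c_{set}-c_{int})[(e^\varepsilon-1)+c_{set}/c_{int}]^2$. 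Multiplying by $\alpha^2 p^2 = 1/[(e^\varepsilon-1)^2(c_{set}-c_{int})^2]$ cancels one power of $(c_{set}-c_{int})$, and dividing top and bottom by $c_{int}^2$ produces exactly the stated ratio in the variables $c_{set}/c_{int}$. For the specialization $c_{set}/c_{int} = e^\varepsilon+1$, the denominator becomes $(e^\varepsilon-1)^2 e^\varepsilon$ and the identity $(e^\varepsilon+1)^2 = (e^\varepsilon-1)^2 + 4e^\varepsilon$ splits the first numerator summand into a residual $n$ piece and a $4ne^\varepsilon/(e^\varepsilon-1)^2$ piece; the second summand contributes $4n(k-1)e^\varepsilon/(e^\varepsilon-1)^2$, and combining and dividing by $k$ gives $\frac{n}{k} + \frac{4ne^\varepsilon}{(e^\varepsilon-1)^2}$.

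The main obstacle is achieving the right tightness: a naive bound $\mathrm{Var}(\mathbf{1}[M_i\in S_v]) \le \Ex[\mathbf{1}[M_i\in S_v]]$ loses a factor that grows with $c_{set}/c_{int}$. The projective-plane identity $c_{set}^2 \ge k\cdot c_{int}$ is exactly the structural property that forces $(1-q_i)$ to be proportional to $(c_{set}-c_{int})$, which then cancels a power from the $\alpha^2 p^2$ prefactor and produces the stated form. The remaining work is routine algebra guided by this cancellation.
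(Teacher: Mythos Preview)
Your proposal is correct and follows essentially the same approach as the paper: decompose $\Ex\|x-\tilde x\|_2^2$ across users and coordinates into Bernoulli variances, then apply the projective-geometry inequality $c_{set}^2\ge k\,c_{int}$ (equivalently $k-c_{set}\le c_{set}(c_{set}-c_{int})/c_{int}$) to extract a factor of $(c_{set}-c_{int})$ that cancels against $\alpha^2$. The only difference is cosmetic: the paper rewrites the per-user contribution via the unbiasedness relations $q_1=(1-\beta)/\alpha$ and $q_0=-\beta/\alpha$ to obtain the compact forms $(\alpha+\beta-1)(1-\beta)$ and $-\beta(\alpha+\beta)$ before substituting, whereas you compute $\alpha^2 q_i(1-q_i)$ directly; both routes yield the same intermediate expressions and the same final bound.
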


\begin{proof}
By independence, we only need to analyze the variance when there is
exactly one user with input $v$. The lemma then follows from adding
up the variances from all users.

\begin{align*}
\Ex\left[\left(\tilde{x}_{v}-1\right)^{2}\right] & =e^{\varepsilon}pc_{set}\left(\alpha+\beta-1\right)^{2}+p(k-c_{set})\left(\beta-1\right)^{2}\\
 & =\frac{1-\beta}{\alpha}\left(\alpha+\beta-1\right)^{2}+\frac{\alpha+\beta-1}{\alpha}\left(1-\beta\right)^{2}\\
 & =\left(\alpha+\beta-1\right)\left(1-\beta\right)\\
 & =\frac{-c_{set}+k}{\left(e^{\varepsilon}-1\right)(c_{set}-c_{int})}\cdot\frac{e^{\varepsilon}c_{set}}{\left(e^{\varepsilon}-1\right)(c_{set}-c_{int})}\\
 & =\frac{\left(-c_{set}/c_{int}+k/c_{int}\right)e^{\varepsilon}c_{set}/c_{int}}{\left(e^{\varepsilon}-1\right)^2(c_{set}/c_{int}-1)^2}\\
 & \le\frac{\left(-c_{set}/c_{int}+c_{set}^2/c_{int}^2\right)e^{\varepsilon}c_{set}/c_{int}}{\left(e^{\varepsilon}-1\right)^2(c_{set}/c_{int}-1)^2}\\
 & =\frac{e^{\varepsilon}c_{set}^2/c_{int}^2}{\left(e^{\varepsilon}-1\right)^{2}(c_{set}/c_{int}-1)}
\end{align*}

Let $z=c_{set}/c_{int}$. Note that $\frac{z^2}{z-1}$ is an increasing function for $z\in [2,+\infty)$ so this part of the variance gets larger as
$q$ gets larger.

Next we analyze the contribution to the variance from coordinates
$u\ne v$. 

\begin{align*}
\Ex\left[\tilde{x}_{u}^{2}\right] & =\left(\left(e^{\varepsilon}-1\right)c_{int}+c_{set}\right)p\left(\alpha+\beta\right)^{2}+\left(1-e^{\varepsilon}pc_{int}-p\left(c_{set}-c_{int}\right)\right)\beta^{2}\\
 & =-\frac{\beta}{\alpha}\left(\alpha+\beta\right)^{2}+\left(1+\frac{\beta}{\alpha}\right)\beta^{2}\\
 & =\frac{-\beta\left(\alpha+\beta\right)^{2}+\left(\alpha+\beta\right)\beta^{2}}{\alpha}\\
 & =-\beta\left(\alpha+\beta\right)\\
 & =\frac{(e^{\varepsilon}-1)c_{int}+c_{set}}{\left(e^{\varepsilon}-1\right)(c_{set}-c_{int})}\cdot\frac{\left(e^{\varepsilon}-2\right)c_{set}+k-(e^{\varepsilon}-1)c_{int}}{\left(e^{\varepsilon}-1\right)(c_{set}-c_{int})}\\
 & =\frac{(e^{\varepsilon}-1)+c_{set}/c_{int}}{\left(e^{\varepsilon}-1\right)(c_{set}/c_{int}-1)}\cdot\frac{\left(e^{\varepsilon}-2\right)c_{set}/c_{int}+k/c_{int}-(e^{\varepsilon}-1)}{\left(e^{\varepsilon}-1\right)(c_{set}/c_{int}-1)}\\
 & \le \frac{\left((e^{\varepsilon}-1)+z\right)\left(\left(e^{\varepsilon}-2\right)z+z^2-(e^{\varepsilon}-1)\right)}{\left(e^{\varepsilon}-1\right)^2(z-1)^2}\\
 & =\frac{\left((e^{\varepsilon}-1)+z\right)^2}{\left(e^{\varepsilon}-1\right)^2(z-1)}\\
\end{align*}

Note that the function $\frac{\left((e^{\varepsilon}-1)+z\right)^{2}}{\left(e^{\varepsilon}-1\right)^{2}(z-1)}$
is decreasing for $z\in\left(0,e^{\varepsilon}+1\right]$
and it is increasing for $z\in\left[e^{\varepsilon}+1,+\infty\right)$
so this part of the variance is minimized when $z=e^{\varepsilon}+1$. For $z=e^{\varepsilon}+1$, we can
substitute and get $\frac{4e^{\varepsilon}}{\left(e^{\varepsilon}-1\right)^{2}}$.
\end{proof}

Next we discuss the algorithms to compute $\tilde{x}_{v}$. The
naive algorithm takes $O(kc_{set})=O(k^{2}/q)$ time and this is the
algorithm of choice for $t\le3$. For $t>3$, we can use dynamic programming
to obtain a faster algorithm. Note in the below that $q$ should be chosen close to $\exp(\eps)+1$.

\begin{theorem}
In the \projectivegeometry scheme, there exists an $O((q^t-1)/(q-1)tq)$
time algorithm for server reconstruction, using $O((q^t-1)/(q-1))$ memory. These bounds are at best $O(ktq)$ time and $O(k)$ memory, and increase by at most a factor of $q$ each if rounding up to the next power of $q$ is needed so that $(q^t-1)/(q-1) \ge k$.
\end{theorem}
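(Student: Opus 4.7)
The plan is to handle small $t$ by the naive algorithm and larger $t$ by a recursive algorithm exploiting the decomposition $P(\F_q^t) = H_\infty \sqcup A$. For $t \le 3$, naive enumeration of $S(v)$ for each canonical $v$ costs $O(k \cdot c_{set}) = O(q^{2t-3})$, which matches the claimed $O(ktq)$ bound when $t$ is bounded.

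For $t > 3$, I would decompose $P(\F_q^t)$ into $H_\infty := \{[u] : u_1 = 0\} \cong P(\F_q^{t-1})$ (of size $k' = (q^{t-1}-1)/(q-1)$) and $A := \{[u] : u_1 \ne 0\}$, an affine chart of size $q^{t-1}$ whose canonical representatives are $(1, u_2, \ldots, u_t)$. Split the received counts as $y = y^\infty + y^A$ accordingly. Writing each canonical $v = (v_1, v')$ with $v' = (v_2, \ldots, v_t) \in \F_q^{t-1}$, the target sum decomposes as $G_v := \sum_{u \in S(v)} y_u = G^\infty_v + G^A_v$, where $G^\infty_v$ equals the analogous quantity on $P(\F_q^{t-1})$ indexed by the canonicalization of $v'$ (with edge cases such as $v' = 0$ computed directly as the total mass of $y^\infty$), and $G^A_v = \sum_{u^A \in \F_q^{t-1} : \langle u^A, v'\rangle = -v_1} y^A(u^A)$ is a sum of $y^A$ over an affine hyperplane of $\F_q^{t-1}$.

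The key step is computing all the $G^A_v$ values efficiently. Using the additive-character identity $[\langle u^A, v'\rangle = c] = \tfrac{1}{q}\sum_{a \in \F_q} \chi(a(\langle u^A, v'\rangle - c))$ on $\F_q$, I obtain $G^A_v = \tfrac{1}{q}\sum_{a \in \F_q} \chi(av_1)\,\hat y^A(av')$, where $\hat y^A(w) := \sum_{u^A} y^A(u^A)\,\chi(\langle u^A, w\rangle)$ is the additive Fourier transform of $y^A$ over $\F_q^{t-1}$. I would precompute $\hat y^A$ by iterated 1-D FFTs along each of the $t-1$ coordinates of $\F_q^{t-1}$: each sweep does $q^{t-2}$ length-$q$ transforms at $O(q^2)$ each, costing $O(q^t)$ per sweep and $O(tq^t)$ total, while using only $O(q^{t-1})$ memory. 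Each $G^A_v$ is then read off in $O(q)$ time, for $O(kq) = O(q^t)$ extra work, dominated by the FFT.

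Combining the recursive computation of $G^\infty$ on $P(\F_q^{t-1})$ with the FFT-based computation of $G^A$ yields the recurrence $T(t) = T(t-1) + O(tq^t)$; with base case $T(3) = O(q^3)$ this solves to $T(t) = O(tq^t) = O((q^t-1)/(q-1) \cdot tq)$. Reusing memory across recursion levels keeps peak usage at $O(q^{t-1}) = O(k)$. The main obstacle will be the careful bookkeeping of canonical representatives when moving between $H_\infty$ and $A$---in particular handling the boundary cases $v = e_1$, $v' = 0$, and $u^A = 0$ correctly, and canonicalizing $[v']$ in $O(t)$ work before indexing into the recursive answer; once that is in place the arithmetic is routine. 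The factor-of-$q$ loss flagged in the statement when $k$ is not of the form $(q^t-1)/(q-1)$ is absorbed by rounding $k$ up to the next such value.
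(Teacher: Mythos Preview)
Your approach is correct and achieves the stated bounds, but it takes a genuinely different route from the paper. The paper does not recurse on the projective decomposition $P(\F_q^t)=H_\infty\sqcup A$ at all; instead it sets up a single dynamic program
\[
f(a,b,z)=\sum_{\substack{\mathrm{pref}_j(u)=a\\ \langle \mathrm{suff}_{t-j}(u),b\rangle=z}} y_u
\]
over triples $(a,b,z)$ with $a\in\F_q^{\,j}$ canonical-or-zero, $b\in\F_q^{\,t-j}$, $z\in\F_q$, and fills the table from $j=t$ down to $j=0$ by peeling one coordinate of $u$ at a time. A final symmetry $f(a,b,z)=f(a,b/\zeta,z/\zeta)$ (where $\zeta$ is the leading nonzero entry of $b$) lets the paper restrict $b$ to canonical-or-zero vectors as well, and this is exactly what shaves the last factor of $q$ to reach $O(ktq)$ time and $O(k)$ memory. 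Your plan and the paper agree on using the naive $O(k\cdot c_{set})$ method for $t\le 3$.

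Two small remarks on your version. First, reading off each $G^A_v$ costs $\Theta(qt)$ rather than $\Theta(q)$ once you account for forming the vectors $av'$ (or incrementally adding $v'$); this totals $O(kqt)$, which is the same order as the FFT rather than strictly dominated by it, but the overall $O(tq^t)=O(ktq)$ conclusion is unaffected. Second, your FFT step works over $\mathbb{C}$ (or a cyclotomic ring containing a primitive additive character of $\F_q$), so as written it needs either floating-point arithmetic with a rounding argument or exact arithmetic in $\Z[\omega]$; the paper's DP stays entirely within integer arithmetic, which is a practical advantage. Conversely, your argument has a cleaner geometric structure and would immediately benefit from any genuinely subquadratic length-$q$ DFT.
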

\begin{proof}
We use dynamic programming. For $a\in\mathbb{F}_{q}^{j},b\in\mathbb{F}_{q}^{t-j},z\in\mathbb{F}_{q}$,
where $a$ is further restricted to have its first nonzero entry be
a $1$ (it may also be the all-zeroes vector), and $b$ is restricted
to be a canonical vector when $j=0$, define

\[
f(a,b,z)=\sum_{\substack{pref_j(u)=a\\
\langle\mathrm{suff}_{t-j}(u),b\rangle=z\\
\\
}
}y_{u},
\]

\noindent where $\mathrm{pref}_{i}(u)$ denotes the length-$i$ prefix
vector of $u$, and $\mathrm{suff}_{i}(u)$ denotes the length-$i$
suffix vector of $u$. Then, we would like to compute

\[
\tilde{x}_{v}=\alpha\left(\sum_{u\in S_{v}}y_{u}\right)+\beta\sum_{u}y_{u}=\alpha\cdot f(\bot,v,0)+\beta n,
\]

\noindent for all projective points $v$, where $\bot$ denotes the length-$0$
empty vector. We next observe that $f$ satisfies a recurrence relation,
so that we can compute the full array of values $(f(\bot,v,0))_{v\text{ is canonical}}$
efficiently using dynamic programming and then efficiently obtain
$\ensuremath{\tilde{x}\in\mathbb{R}^{k}}$.

We now describe the recurrence relation. For $w\in\mathbb{F}_{q}$
and a vector $v$, let $v\circ w$ denote $v$ with $w$ appended
as one extra entry. If $j$ denotes the length of the vector $a$,
then the base case is $j=t$. In this case, $f(a,\bot,z)=y_{a}$ iff
both $a\neq0$ and $z=0$; else, $f(a,\bot,z)=0$. The recursive step
is then when $0\le j<t$. Essentially, we have to sum over all ways
to extend $a$ by one more coordinate. Let $\mathrm{suff_{-1}(b)}$
denote the vector $b$ but with the first entry removed (so it is
a vector of length one shorter). There are two cases: $a$ is the
all-zeroes vector, versus it is not. In the former case, the recurrence
is
\[
f(0,b,z)=f(\vec{0}\circ0,\mathrm{suff_{-1}(b)},z)+f(\vec{0}\circ1,\mathrm{suff_{-1}(b)},z-b_{1}\mod q).
\]

\noindent Note we are not allowed to append $w\in\{2,3,\ldots,q-1\}$
to $a$ since that would not satisfy the requirement that the first
argument to $f$ either be all-zeroes or be canonical. The other case
for the recurrence relation is when $a\neq0$, in which case the recurrence
relation becomes

\[
f(a,b,z)=\sum_{w=0}^{q-1}f(a\circ w,\mathrm{suff_{-1}}(b),z-d\cdot b_{1}\mod q).
\]

We now analyze the running time and memory requirements to obtain
all $f(a,b,z)$ values via dynamic programming. The runtime is proportional to

\[
kq+\sum_{\substack{a,b,z,j\neq0}
}q.
\]

\noindent This is because for $j>0$, for each $a,b,z$ triple we
do at most $q$ work. When $j=0$, there is only one possible value
for $a$ (namely $\bot$) and $k=\frac{q^{t}-1}{q-1}$ values for
$v$, plus we are only concerned with $z=0$ in this case. For larger
$j$, the number of possibilities for $a$ is $\frac{q^{j}-1}{q-1}+1$
(the additive $1$ is since $a$ can be the all-zeroes vector), whereas
the number of possibilities for $b$ is $q^{t-j}$. Thus the total
runtime is proportional to
\[
kq+\left(\sum_{j=1}^{t}\left(\frac{q^{j}-1}{q-1}+1\right)\cdot q^{t-j}\right)\cdot q^{2}=O(ktq^{2}).
\]

For the memory requirement, note $f(\cdot)$ values for some fixed
$j$ only depend on the values for $j+1$, and thus using bottom-up
dynamic programming we can save a factor of $t$ in the memory, for
a total memory requirement of only $O(kq)$ (for any fixed $j$ there
are only $O(k)$ $a,b$ pairs, and there are $q$ values for $z$).

Finally, we add an optimization which improves both the runtime and memory by a factor of $q$. Specifically, suppose $b$ is
not canonical and is not the all-zeroes vector. Let the value of its
first nonzero entry be $\zeta$. Then $f(a,b,z)$ is equal to $f(a,b/\zeta,z/\zeta)$,
where the division is over $\mathbb{F}_{q}$. Thus, we only need to
compute $f(\cdot)$ for $b$ either canonical or equal to the $0$
vector. This reduces the number of $b$ from $q^{t-j}$ to $(q^{t-j}-1)/(q-1)+1$,
which improves the runtime to $O(ktq)$ and the memory to $O(k)$.  Note finite field division over $\mathbb F_q$ can be implemented in $O(1)$ time after preprocessing. First,  factor $q-1$ and generate all its divisors in $o(q)$ time, from which we can find a generator $g$ of $\mathbb F_q^*$ in $o(q)$ expected time by rejection sampling (it is a generator iff $g^p\not\equiv 1\mod q$ for every nontrivial divisor $p$ of $q$, and we can compute $g^p\bmod q$ in $O(\log q)$ time via repeated squaring). Then, in $O(q)$ time create a lookup table $A[0\ldots q-1]$ with $A[i] := g^i\mod q$. Then create an inverse lookup table by for each $0\le i < q$, setting the inverse of $A[i]$ to $A[q-1-i]$.
\end{proof}

\section{\hybridprojectivegeometry: trading off error and time}\label{sec:tradeoff}

In this section, we describe a hybrid scheme using an intermediate
value for the field size $q$ to trade off between the variance and
the running time. Roughly speaking, larger values for $q$ lead to
slower running time but also smaller variance.
The approach is similar to the way~\cite{AcharyaSZ19}
extended their scheme from the high privacy regime to the general setting.
We choose $h,q,t$ such that they satisfy the following
conditions:
\begin{itemize}
\item $b=\frac{q^{t}-1}{q-1}$ and $bh\ge k> c_{set}h$. 
\item Let $c_{set}=\frac{q^{t-1}-1}{q-1}$, $c_{int}=\frac{q^{t-2}-1}{q-1}$,
and $z=c_{set}/c_{int}$. Note that $c_{set}^{2}\ge b\cdot c_{int}$
and $q+1\ge z\ge q$.
\item Choose $hz$ as close as possible to $e^{\varepsilon}+1$.
\end{itemize}
The input coordinates are partitioned into blocks of size at most
$b$ each. The algorithm's response consists of two parts: the index
of the block and the index inside the block. First, the algorithm
uses the randomized response to report the block. Next, if the response
has the correct block then the algorithm uses the scheme described
in the previous section with field size $q$ to describe the coordinate
inside the block. If the first response has the wrong block then the
algorithm uses a uniformly random response in the second part.

More precisely, the algorithm works as follows. Each input value is
identified with a pair $(i,v)$ where $i\in\Z_{h}$ and $v$ is a canonical
vector in $\F_{q}^{t}$. If $k<bh$ then we allocate up to $\lceil k/h\rceil$
input values to each block.
The response is a pair $(j,u)$ where $i\in\Z_{h}$
and $u$ is a canonical vector in $\F_{q}^{t}$ chosen as follows.
For $j=i$ and $\left\langle u,v\right\rangle =0$, the pair $(j,u)$
is chosen with probability $e^{\varepsilon}p$. All other choices are
chosen with probability $p$ each. Because all probabilities are either
$p$ or $e^{\varepsilon}p$, the scheme is $\varepsilon$-private. We have

\begin{align*}
e^{\varepsilon}p\cdot c_{set}+p\left(bh-c_{set}\right) & =1\\
p & =\frac{1}{bh+\left(e^{\varepsilon}-1\right)c_{set}}
\end{align*}

Let $\tilde{x}_{i,v}$ be our estimate for the frequency of input
$(i,v)$. The estimates are computed as follows.

\[
\tilde{x}_{i,v}=\alpha\left(\sum_{\left\langle v,u\right\rangle =0}y_{i,u}\right)+\beta\left(\sum_{u}y_{i,u}\right)+\gamma\left(\sum_{j,u}y_{j,u}\right)
\]

We need to choose $\alpha,\beta$ and $\gamma$ so that $\tilde{x}_{i,v}$
is an unbiased estimator of $x_{i,v}$. By linearity of expectation,
we only need to consider the case with exactly one user. If the input
is $i,v$ then we have
\[
\Ex\left[\tilde x_{i,v}\right]=\alpha e^{\varepsilon}pc_{set}+\beta p\left(\left(e^{\varepsilon}-1\right)c_{set}+b\right)+\gamma=1
\]

If the input is not $i,v$ but in the same block then

\[
\Ex\left[\tilde x_{i,v}\right]=\alpha p\left(\left(e^{\varepsilon}-1\right)c_{int}+c_{set}\right)+\beta p\left(\left(e^{\varepsilon}-1\right)c_{set}+b\right)+\gamma=0
\]

Finally if the input is in a different block then
\[
\Ex\left[\tilde x_{i,v}\right]=\alpha pc_{set}+\beta pb+\gamma=0
\]

We solve for $\alpha,\beta,\gamma$ and get

\begin{align*}
\alpha & =\frac{1}{p\left(e^{\varepsilon}-1\right)\left(c_{set}-c_{int}\right)}=\frac{bh+\left(e^{\varepsilon}-1\right)c_{set}}{\left(e^{\varepsilon}-1\right)\left(c_{set}-c_{int}\right)}\\
\beta & =-\frac{\alpha c_{int}}{c_{set}}=-\frac{c_{int}/c_{set}}{p\left(e^{\varepsilon}-1\right)\left(c_{set}-c_{int}\right)}\\
&=-\frac{bh+\left(e^{\varepsilon}-1\right)c_{set}}{\left(e^{\varepsilon}-1\right)\left(c_{set}-c_{int}\right)}\cdot\frac{c_{int}}{c_{set}}\\
\gamma & =-\alpha pc_{set}-\beta pb=-\frac{c_{set}-b\cdot\frac{c_{int}}{c_{set}}}{\left(e^{\varepsilon}-1\right)\left(c_{set}-c_{int}\right)}\le0
\end{align*}

We note that $\alpha+\beta=\left(1-\frac{c_{int}}{c_{set}}\right)\alpha=\frac{bh/c_{set}+\left(e^{\varepsilon}-1\right)}{\left(e^{\varepsilon}-1\right)}$.

\begin{lemma}\label{lem:tradeoff-error}
\begin{align*}
&\Ex\left[\left\Vert x-\tilde{x}\right\Vert_2^{2}\right]\le n\left(1+\frac{\left(zh+\left(e^{\varepsilon}-1\right)\right)}{\left(e^{\varepsilon}-1\right)^{2}\left(z-1\right)}+\frac{2}{\left(e^{\varepsilon}-1\right)}+\frac{e^{\varepsilon}\left(zh-e^{\varepsilon}+1\right)}{\left(e^{\varepsilon}-1\right)^{2}}\right)\\
&+n\frac{\left(zh+\left(e^{\varepsilon}-1\right)\right)z}{\left(e^{\varepsilon}-1\right)^{2}\left(z-1\right)}\left(k-\left\lceil k/h\right\rceil +(\left\lceil k/h\right\rceil -1)\frac{\left(z+e^{\varepsilon}-1\right)}{z}\right)
\end{align*}
In particular, if $zh=e^{\varepsilon}+1$ then $\Ex\left[\frac 1k\left\Vert x-\tilde{x}\right\Vert_2^{2}\right]\le \frac nk+\frac{z}{z-1}\cdot n\frac{4e^{\varepsilon}}{\left(e^{\varepsilon}-1\right)^{2}}$
\end{lemma}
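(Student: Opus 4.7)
The plan is to mirror the structure of the proof of \cref{lem:pg-var}, but with three response cases instead of two. By independence of the $n$ users and linearity of expectation, $\Ex[\|x-\tilde{x}\|_2^2]$ decomposes into a sum of per-user, per-coordinate second moments; since each $\tilde{x}_{i,v}$ has already been made unbiased by our choice of $\alpha,\beta,\gamma$, these second moments are exactly variances. It therefore suffices to exhibit, for an arbitrary fixed user input $(i^*,v^*)$, a bound on
$$W(i^*,v^*) := \Ex\bigl[(\tilde{x}_{i^*,v^*}-1)^2\bigr] \;+\; \sum_{(i,v)\neq(i^*,v^*)} \Ex\bigl[\tilde{x}_{i,v}^2\bigr]$$
and then multiply by $n$.

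Next I would partition the target coordinates into three groups with respect to $(i^*,v^*)$: (a) the user's own coordinate ($1$ term), (b) same block as the user but distinct projective point ($\lceil k/h\rceil-1$ terms), and (c) a different block ($k-\lceil k/h\rceil$ terms in the worst case). For each fixed target $(i,v)$, the statistic $\tilde{x}_{i,v}$ takes only three possible values, depending on whether the reported pair $(j,u)$ satisfies $j=i$ with $\langle u,v\rangle=0$, $j=i$ with $\langle u,v\rangle\neq 0$, or $j\neq i$; these contribute $\alpha+\beta+\gamma$, $\beta+\gamma$, and $\gamma$ respectively. The probabilities of these three events, conditional on the user input $(i^*,v^*)$, are read off using the weights $p$ or $e^\eps p$ prescribed by the local randomizer together with $c_{set}$, $c_{int}$, and the intersection identity from \cref{sec:basic-construction}; these probabilities differ across cases (a), (b), (c) but collectively involve only a handful of distinct quantities.

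The main obstacle is keeping the algebra manageable, since there are now three cases and three possible values of $\tilde{x}_{i,v}$ per case. I would exploit the three defining equations for $\alpha,\beta,\gamma$ to derive compact identities, notably $\alpha+\beta=(1-c_{int}/c_{set})\alpha$ and $\gamma=-p(\alpha c_{set}+\beta b)$, which collapse each of the three case variances into a constant multiple of the core quantity $\frac{zh+(e^\eps-1)}{(e^\eps-1)^2(z-1)}$, just as the factorizations $\Ex[(\tilde{x}_v-1)^2]=(\alpha+\beta-1)(1-\beta)$ and $\Ex[\tilde{x}_u^2]=-\beta(\alpha+\beta)$ did in the proof of \cref{lem:pg-var}. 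Summing the three variances against the multiplicities $1$, $\lceil k/h\rceil-1$, and $k-\lceil k/h\rceil$ then yields exactly the two-line bound claimed, after using $k\le bh$ together with $c_{set}^2\ge b\cdot c_{int}$ on the Case (a) contribution (paralleling the analogous use of $k\le c_{set}^2/c_{int}$ in \cref{lem:pg-var}).

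For the second assertion, substitute $zh=e^\eps+1$ into the first bound, so that $zh+(e^\eps-1)=2e^\eps$ and $zh-(e^\eps-1)=2$. After dividing through by $k$, the $O(1)$-sized first line contributes $O(n/k)$, while the second line simplifies to $\frac{z}{z-1}\cdot\frac{4ne^\eps}{(e^\eps-1)^2}$, giving the stated bound.
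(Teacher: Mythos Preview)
Your proposal is correct and follows essentially the same structure as the paper's proof: reduce to a single user by independence, split the target coordinates into the three cases (own coordinate, same block, different block), compute the second moment in each case from the three possible values of the estimator, and sum against the multiplicities $1$, $\lceil k/h\rceil-1$, $k-\lceil k/h\rceil$. The one device the paper adds that you do not mention is the inequality $\Ex[(\tilde{x}_{i,v}-c)^2]\le\Ex[(\tilde{x}_{i,v}-c-\gamma)^2]$ (valid since $\tilde{x}_{i,v}-c$ has mean zero and $\gamma$ is constant), which reduces the three possible values from $\gamma,\beta+\gamma,\alpha+\beta+\gamma$ to $0,\beta,\alpha+\beta$ and makes the algebra cleaner; your plan to work directly with the defining equations for $\alpha,\beta,\gamma$ would get there too, just with more bookkeeping.
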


\begin{proof}
By independence, we only need to analyze the variance when there is
exactly one user with input $(i,v)$ and response $(j,u)$. The lemma follows from adding up the variances from all users.

\begin{align*}
\Ex\left[\left(\tilde{x}_{i,v}-1\right)^{2}\right] \le &\Ex\left[\left(\tilde{x}_{i,v}-1-\gamma\right)^{2}\right]\\
 =&\Pr\left[j\ne i\right]\cdot\left(-1\right)^{2}+\Pr\left[j=i\wedge\left\langle u,v\right\rangle \ne0\right]\left(\beta-1\right)^{2}
 +\Pr\left[j=i\wedge\left\langle u,v\right\rangle =0\right]\left(\alpha+\beta-1\right)^{2}\\
 =&\left(1-\left(e^{\varepsilon}-1\right)pc_{set}-pb\right)+p\left(b-c_{set}\right)\left(\beta-1\right)^{2}
 +e^{\varepsilon}pc_{set}\left(\alpha+\beta-1\right)^{2}\\
 =&1+p\left(b-c_{set}\right)\left(\beta^2-2\beta\right)
 +e^{\varepsilon}pc_{set}\left(\alpha+\beta\right)\left(\alpha+\beta-2\right)
\end{align*}
We expand the second and third terms individually:
\begin{align*}
    &p\left(b-c_{set}\right)\left(\beta^2-2\beta\right)\\
    &=p\left(b-c_{set}\right)\left(\left(\frac{c_{int}/c_{set}}{p\left(e^{\varepsilon}-1\right)\left(c_{set}-c_{int}\right)}\right)^{2}+\frac{2c_{int}/c_{set}}{p\left(e^{\varepsilon}-1\right)\left(c_{set}-c_{int}\right)}\right)\\
    &=\frac{\left(bh+\left(e^{\varepsilon}-1\right)c_{set}\right)\left(b-c_{set}\right)c_{int}^{2}/c_{set}^{2}}{\left(e^{\varepsilon}-1\right)^{2}\left(c_{set}-c_{int}\right)^{2}}+\frac{2\left(b-c_{set}\right)c_{int}/c_{set}}{\left(e^{\varepsilon}-1\right)\left(c_{set}-c_{int}\right)}\\
    &=\frac{\left(bh/c_{set}+\left(e^{\varepsilon}-1\right)\right)\left(b/c_{set}-1\right)}{\left(e^{\varepsilon}-1\right)^{2}\left(c_{set}/c_{int}-1\right)^{2}}+\frac{2\left(b/c_{set}-1\right)}{\left(e^{\varepsilon}-1\right)\left(c_{set}/c_{int}-1\right)}\\
    &\le\frac{\left(zh+\left(e^{\varepsilon}-1\right)\right)}{\left(e^{\varepsilon}-1\right)^{2}\left(z-1\right)}+\frac{2}{\left(e^{\varepsilon}-1\right)}
\end{align*}
and
\begin{align*}
    &e^{\varepsilon}pc_{set}\left(\alpha+\beta\right)\left(\alpha+\beta-2\right)\\
    &=\frac{e^{\varepsilon}c_{set}}{bh+\left(e^{\varepsilon}-1\right)c_{set}}\cdot\frac{bh/c_{set}+\left(e^{\varepsilon}-1\right)}{\left(e^{\varepsilon}-1\right)}\cdot \frac{bh/c_{set}-\left(e^{\varepsilon}-1\right)}{\left(e^{\varepsilon}-1\right)}\\
    &=\frac{e^{\varepsilon}\left(bh/c_{set}-e^{\varepsilon}+1\right)}{\left(e^{\varepsilon}-1\right)^{2}}\\
    &\le\frac{e^{\varepsilon}\left(zh-e^{\varepsilon}+1\right)}{\left(e^{\varepsilon}-1\right)^{2}}
\end{align*}

When $zh=e^{\varepsilon}+1$, we have 
\begin{align*}
\Ex\left[\left(\tilde{x}_{i,v}-1\right)^{2}\right]\le
1+\frac{2e^{\varepsilon}}{\left(e^{\varepsilon}-1\right)^{2}\left(z-1\right)}+\frac{2}{\left(e^{\varepsilon}-1\right)}+\frac{2e^{\varepsilon}}{\left(e^{\varepsilon}-1\right)^{2}}<1+\frac{4e^{\varepsilon}}{\left(e^{\varepsilon}-1\right)^{2}}\frac{z}{z-1}
\end{align*}
Next consider $v'\ne v$.

\begin{align*}
\Ex\left[\left(\tilde{x}_{i,v'}-0\right)^{2}\right] & \le\Ex\left[\left(\tilde{x}_{i,v'}-\gamma\right)^{2}\right]\\
 & =\Pr\left[j\ne i\right]\cdot0+\Pr\left[j=i\wedge\left\langle u,v'\right\rangle \ne0\right]\left(\beta\right)^{2}+\Pr\left[j=i\wedge\left\langle u,v'\right\rangle =0\right]\left(\alpha+\beta\right)^{2}\\
 & =p\left(b+\left(e^{\varepsilon}-2\right)c_{set}-\left(e^{\varepsilon}-1\right)c_{int}\right)\left(\beta\right)^{2}+p\left(\left(e^{\varepsilon}-1\right)c_{int}+c_{set}\right)\left(\alpha+\beta\right)^{2}\\
 & =\left(b+\left(e^{\varepsilon}-2\right)c_{set}-\left(e^{\varepsilon}-1\right)c_{int}\right)\frac{1}{p\left(e^{\varepsilon}-1\right)^{2}\left(c_{set}-c_{int}\right)^{2}}\frac{c_{int}^{2}}{c_{set}^{2}}+\\
 & \ \ \ p\left(\left(e^{\varepsilon}-1\right)c_{int}+c_{set}\right)\frac{\left(1-c_{int}/c_{set}\right)^{2}}{p^{2}\left(e^{\varepsilon}-1\right)^{2}\left(c_{set}-c_{int}\right)^{2}}\\
 & =\left(b/c_{set}+\left(e^{\varepsilon}-2\right)-\left(e^{\varepsilon}-1\right)c_{int}/c_{set}\right)\frac{\left(bh/c_{set}+\left(e^{\varepsilon}-1\right)\right)}{\left(e^{\varepsilon}-1\right)^{2}\left(1-c_{int}/c_{set}\right)^{2}}\frac{c_{int}^{2}}{c_{set}^{2}}+\\
 & \ \ \ \left(\left(e^{\varepsilon}-1\right)c_{int}/c_{set}+1\right)\frac{\left(bh/c_{set}+\left(e^{\varepsilon}-1\right)\right)}{\left(e^{\varepsilon}-1\right)^{2}}\\
 & \le\left(z+\left(e^{\varepsilon}-2\right)-\left(e^{\varepsilon}-1\right)/z\right)\frac{\left(zh+\left(e^{\varepsilon}-1\right)\right)}{\left(e^{\varepsilon}-1\right)^{2}\left(z-1\right)^{2}}+\left(\left(e^{\varepsilon}-1\right)/z+1\right)\frac{\left(zh+\left(e^{\varepsilon}-1\right)\right)}{\left(e^{\varepsilon}-1\right)^{2}}\\
 & =\left(z+e^{\varepsilon}-1\right)\frac{\left(zh+\left(e^{\varepsilon}-1\right)\right)}{\left(e^{\varepsilon}-1\right)^{2}\left(z-1\right)}
\end{align*}

When $zh=e^{\varepsilon}+1$, the last expression is bounded by $\left(z+e^{\varepsilon}-1\right)\frac{2e^{\varepsilon}}{\left(e^{\varepsilon}-1\right)^{2}\left(z-1\right)z}+\left(z+e^{\varepsilon}-1\right)\frac{2e^{\varepsilon}}{\left(e^{\varepsilon}-1\right)^{2}z}=\frac{\left(z+e^{\varepsilon}-1\right)}{\left(z-1\right)}\cdot\frac{2e^{\varepsilon}}{\left(e^{\varepsilon}-1\right)^{2}}\le\frac{1+h}{z}\frac{z}{\left(z-1\right)}\cdot\frac{2e^{\varepsilon}}{\left(e^{\varepsilon}-1\right)^{2}}$

Finally, consider $i'\ne i$ and arbitrary $v'$.

\begin{align*}
\Ex\left[\left(\tilde{x}_{i',v'}-0\right)^{2}\right] & \le\Ex\left[\left(\tilde{x}_{i',v'}-\gamma\right)^{2}\right]\\
 & =\Pr\left[j\ne i'\right]\cdot0^{2}+\Pr\left[j=i'\wedge\left\langle u,v'\right\rangle \ne0\right]\left(\beta\right)^{2}+\Pr\left[j=i'\wedge\left\langle u,v'\right\rangle =0\right]\left(\alpha+\beta\right)^{2}\\
 & =p\left(b-c_{set}\right)\left(\beta\right)^{2}+pc_{set}\left(\alpha+\beta\right)^{2}\\
 & =p\left(b-c_{set}\right)\frac{1}{p^{2}\left(e^{\varepsilon}-1\right)^{2}\left(c_{set}-c_{int}\right)^{2}}\frac{c_{int}^{2}}{c_{set}^{2}}+pc_{set}\frac{\left(1-c_{int}/c_{set}\right)^{2}}{p^{2}\left(e^{\varepsilon}-1\right)^{2}\left(c_{set}-c_{int}\right)^{2}}\\
 & =\left(b/c_{set}-1\right)\frac{\left(bh/c_{set}+\left(e^{\varepsilon}-1\right)\right)}{\left(e^{\varepsilon}-1\right)^{2}\left(1-c_{int}/c_{set}\right)^{2}}\frac{c_{int}^{2}}{c_{set}^{2}}+\frac{bh/c_{set}+\left(e^{\varepsilon}-1\right)}{\left(e^{\varepsilon}-1\right)^{2}}\\
 & \le\frac{\left(zh+\left(e^{\varepsilon}-1\right)\right)}{\left(e^{\varepsilon}-1\right)^{2}\left(z-1\right)}+\frac{zh+\left(e^{\varepsilon}-1\right)}{\left(e^{\varepsilon}-1\right)^{2}}\\
 & =\frac{\left(zh+\left(e^{\varepsilon}-1\right)\right)z}{\left(e^{\varepsilon}-1\right)^{2}\left(z-1\right)}
\end{align*}

When $zh=e^{\varepsilon}+1$, the last expression is bounded by $\frac{2e^{\varepsilon}}{\left(e^{\varepsilon}-1\right)^{2}}\frac{1}{z-1}+\frac{2e^{\varepsilon}}{\left(e^{\varepsilon}-1\right)^{2}}=\frac{2e^{\varepsilon}}{\left(e^{\varepsilon}-1\right)^{2}}\frac{z}{z-1}$

There are $b_{i}\le\left\lceil k/h\right\rceil \le b$ valid coordinates
in the same block with the input $(i,v)$. There are $k-b_{i}$ coordinates
in the other blocks. Thus the total variance across all coordinates except for coordinate $(i,v)$
is bounded by

\begin{align*}
\frac{\left(zh+\left(e^{\varepsilon}-1\right)\right)z}{\left(e^{\varepsilon}-1\right)^{2}\left(z-1\right)}\left(k-b_{i}+(b_{i}-1)\frac{\left(z+e^{\varepsilon}-1\right)}{z}\right)\\
\le\frac{\left(zh+\left(e^{\varepsilon}-1\right)\right)z}{\left(e^{\varepsilon}-1\right)^{2}\left(z-1\right)}\left(k-\left\lceil k/h\right\rceil +(\left\lceil k/h\right\rceil -1)\frac{\left(z+e^{\varepsilon}-1\right)}{z}\right)
\end{align*}

For $zh=e^{\varepsilon}+1$, we have $\frac{\left(z+e^{\varepsilon}-1\right)}{z}\le1+h$
and $k-b_{i}+(b_{i}-1)\frac{\left(z+e^{\varepsilon}-1\right)}{z}\le k-\left\lceil k/h\right\rceil +(\left\lceil k/h\right\rceil -1)(1+h)=k-\left\lceil k/h\right\rceil +\left\lceil k/h\right\rceil -1+h(\left\lceil k/h\right\rceil -1)<2k$.
\end{proof}
Regarding the decoding algorithms, notice that the estimates are computed
separately by blocks except for an offset $\gamma$ scaled by the
total number of received messages across all blocks. Thus, using the
naive algorithm, the time to estimate one count is $O\left(c_{set}\right)=O\left(q^{\left\lceil \log_{q}\left(k/h\right)\right\rceil -1}\right)$.
Using the fast algorithm to estimate all counts takes $O\left(bqt\right)$
time per block and in total, $O\left(bqth\right)=O\left(\left\lceil \log_{q}\left(k/h\right)\right\rceil hq^{1+\left\lceil \log_{q}\left(k/h\right)\right\rceil }\right)$
time.

\section{Experimental Results}
\label{sec:experiments}
In this section, we compare previously-known algorithms (\rappor, \pirappor, \hadamardresponse (\hr), \recursivehadamardresponse (\rhr), \subsetselection (\SS)) and our new algorithms \projectivegeometry (\pg) and \hybridprojectivegeometry (\hpg). As the variance upper bound of these algorithms do not depend on the underlying data, we perform our experiments on simple synthetic data that realize the worst case for variance. Our experiments show that \textsf{ProjectiveGeometryOracle} matches the best of these algorithms namely \SS, \rappor, and \pirappor, and achieves noticeably better MSE than other communication- and computation-efficient approaches. At the same time it is significantly more efficient than those three in terms of server computation time, while also achieving optimal communication.

\begin{table*}
  \begin{center}
  \begin{tabular}{|c|c|}
    \hline
    \textbf{scheme name} & \textbf{runtime (in seconds)}\\
    \hline
    \pirappor & 1{,}893.82 (approximately 31.5 minutes)\\
    \hline
    \pg & 36.92 \\
    \hline
    \hpg & 5.94\\
    \hline
    \rhr & 1.20\\
    \hline
    \hr & 0.64\\
    \hline
    \rr & 0.02\\
    \hline
  \end{tabular}
  \caption{Server runtimes for $\eps = 5$, $k = 3{,}307{,}948$. For \hpg, we chose the parameters $h = 50, q = 3, t = 11$, so that the mechanism rounded up the universe size to $h(q^t-1)/(q-1)$, which is about 34\% larger than $k$.}\label{fig:runtimes}
  \end{center}
\end{table*}

All experiments were run on a Dell Precision T3600 with six Intel 3.2 GHz Xeon E5-1650 cores running Ubuntu 20.04 LTS, though our implementation did not take advantage of parallelism. We implemented all algorithms and ran experiments in C++, using the GNU C++ compiler version 9.3.0; code and details on how to run the experiments used to generate data and plots are in our public repository at \url{https://github.com/minilek/private_frequency_oracles/}.

We first performed one experiment to show the big gap in running times. We took $\eps = 5$, a practically relevant setting, and $n = 10{,}000$, $k = 3{,}307{,}948$; this setting of $n$ is smaller than one would see in practice, but the runtimes of the algorithms considered are all linear in $n$ plus additional terms that depend on $k,\eps$, and our aim was to measure the impact of these additive terms, which can be significant even for large $n$. Furthermore, in practice the server can immediately process messages from each of the $n$ users dynamically as the messages arrive asynchronously, whereas it is the additive terms that must be paid at once at the time of histogram reconstruction. For our settings, the closest prime to $\exp(\eps)+1 \approx 149.4$ is $q=151$. Recall that \pg rounds up to universe sizes of the form $(q^t-1)/(q-1)$; then $(q^4-1)/(q-1)$ is less than 5\% larger than $k$, so that the negative effect of such rounding on the runtime of \pg is minimal. Meanwhile \pirappor picks the largest prime $q'$ smaller than $\exp(\eps)+1$, which in this case is $q' = 149$, and assumes universe sizes of the form $q'^t - 1$; in this case $q'^3 - 1 = k$ exactly, so rounding issues do not negatively impact the running time of \pirappor (we chose this particular value of $k$ intentionally, to show \pirappor's performance in the best light for some fairly large universe size). The runtimes of various algorithms with this setting of $\eps,k,n$ are shown in \cref{fig:runtimes}. Note \rhr and \hr sacrifice a constant factor in utility compared to \pirappor and \pg, the former of which is four orders of magnitude slower while the latter is only one order of magnitude slower and approximately 51x faster than \pirappor. Meanwhile, \hpg's runtime is of the same order of magnitude (though roughly 5x slower) than \rhr, but as we will see shortly, \hpg can provide significantly improved utility over \rhr and \hr.


\begin{figure}
     \centering
     \begin{subfigure}[b]{0.48\textwidth}
         \centering
         \includegraphics[width=\textwidth]{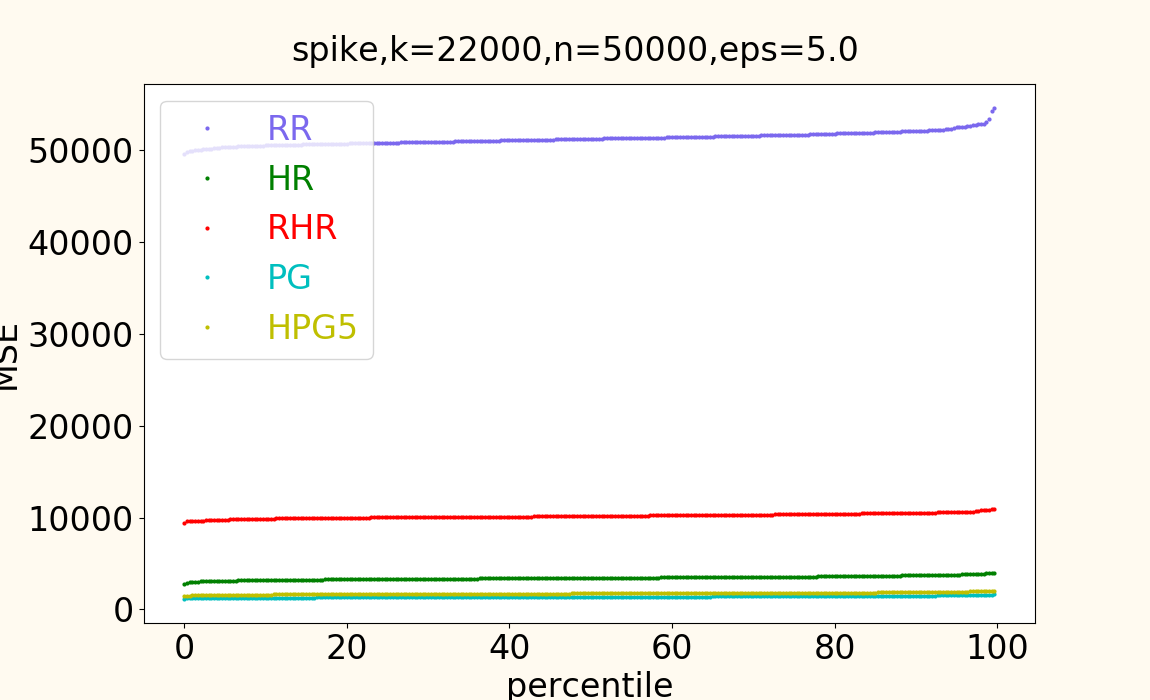}
         \caption{}
         \label{fig:spike5mse}
     \end{subfigure}
     \hfill
     \begin{subfigure}[b]{0.48\textwidth}
         \centering
         \includegraphics[width=\textwidth]{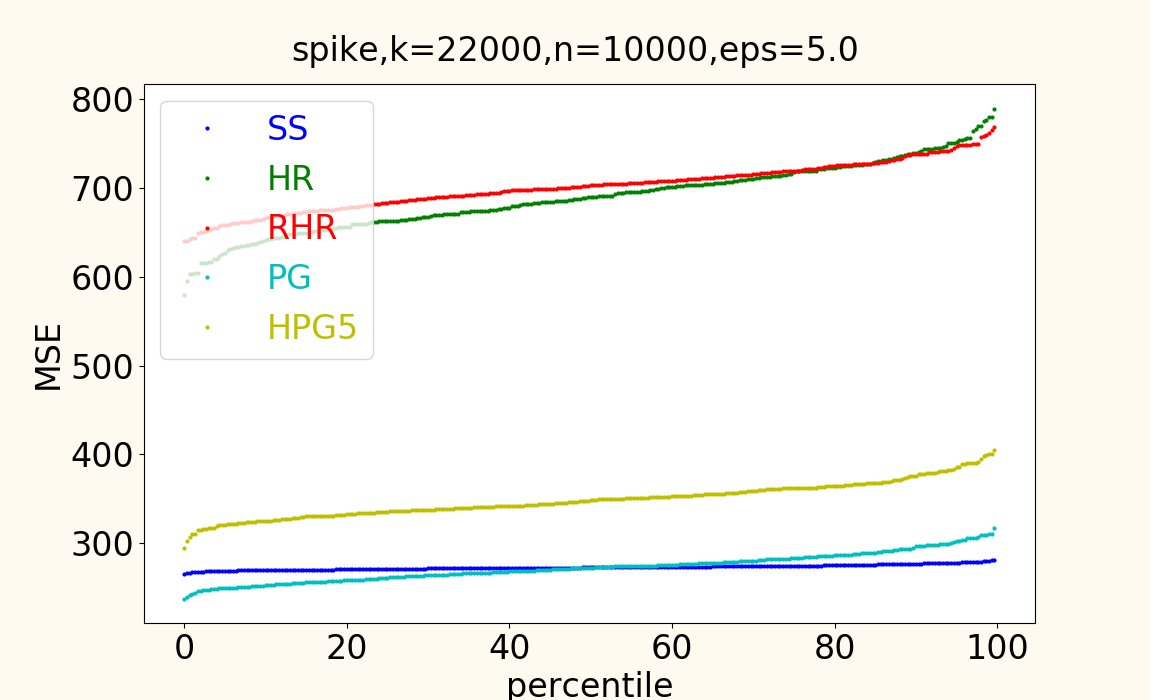}
         \caption{}
         \label{fig:spike7mse}
     \end{subfigure}
        \caption{\randomizedresponse has significantly worse error than other algorithms, even for moderately large universes, followed by \hadamardresponse and \recursivehadamardresponse, which have roughly double the error of state-of-the-art algorithms. \hybridprojectivegeometry trades off having slightly worse error than state-of-the-art for faster runtime.}
        \label{fig:rrbad}
\end{figure}

\begin{figure*}[!h]
     \centering
     \begin{subfigure}[b]{0.45\textwidth}
         \centering
         \includegraphics[width=\textwidth]{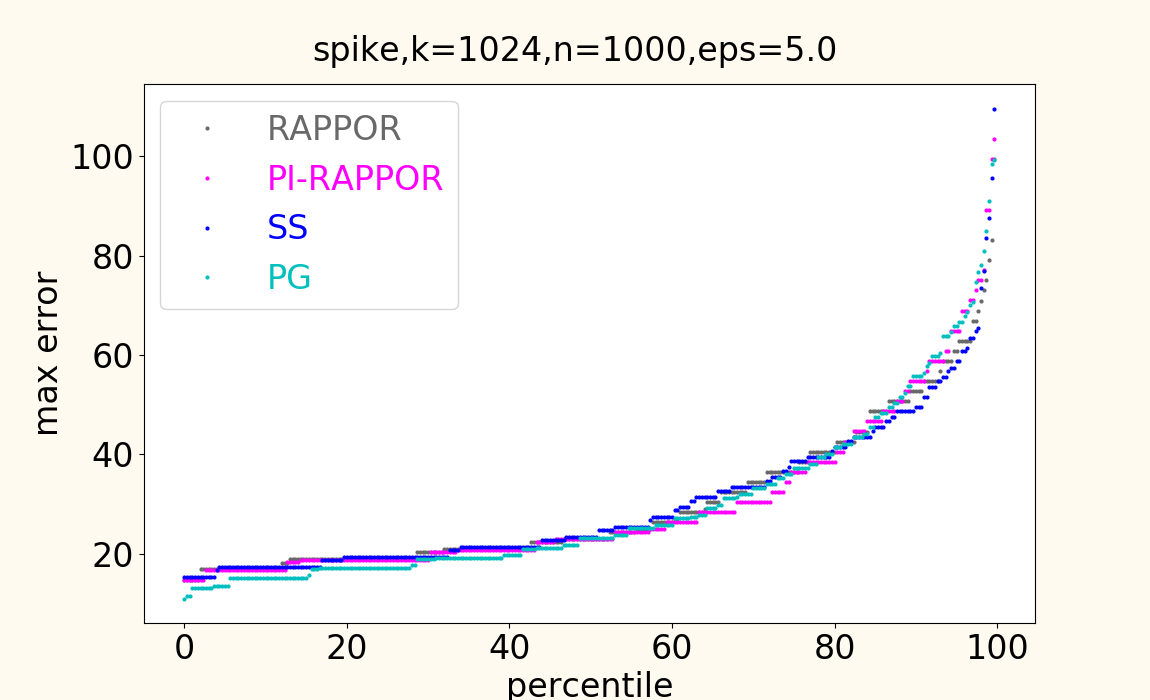}
         \caption{}
         \label{fig:spike1mse}
     \end{subfigure}
     \hfill
     \begin{subfigure}[b]{0.45\textwidth}
         \centering
         \includegraphics[width=\textwidth]{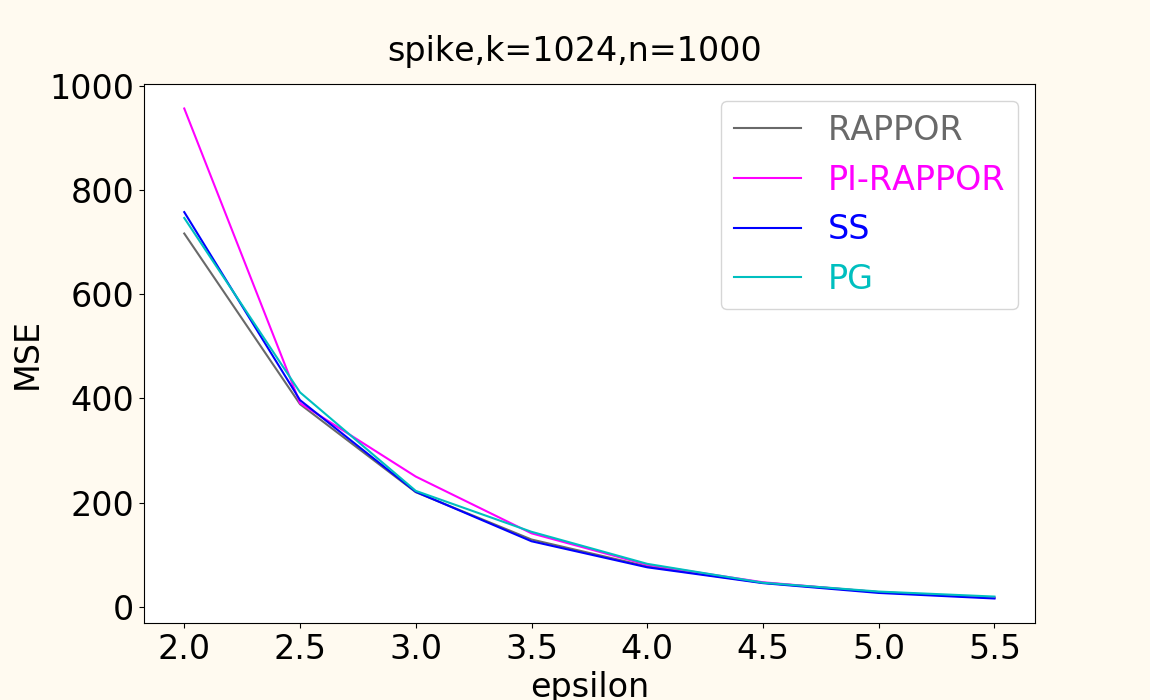}
         \caption{}
         \label{fig:spike1var}
     \end{subfigure}
     \hfill
     \begin{subfigure}[b]{0.45\textwidth}
         \centering
         \includegraphics[width=\textwidth]{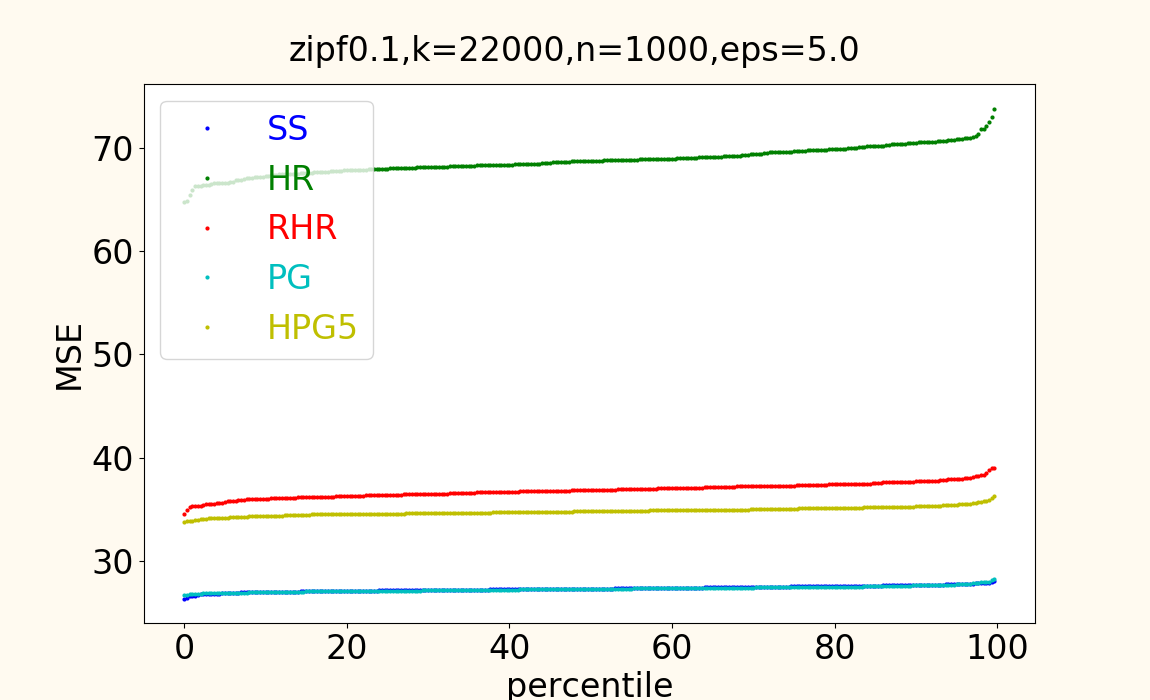}
         \caption{}
         \label{fig:zipf0.1mse}
 \end{subfigure}
      \hfill
     \begin{subfigure}[b]{0.45\textwidth}
         \centering
         \includegraphics[width=\textwidth]{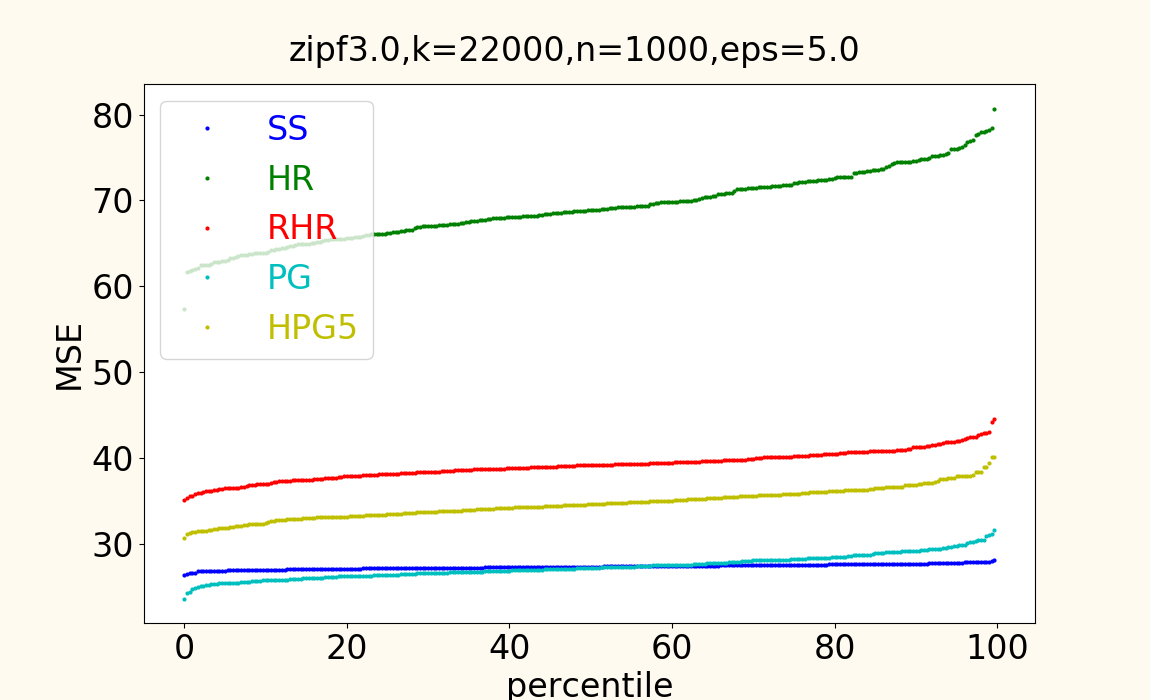}
         \caption{}
         \label{fig:zipf3.0mse}
     \end{subfigure}
     \hfill
     \begin{subfigure}[b]{0.45\textwidth}
         \centering
         \includegraphics[width=\textwidth]{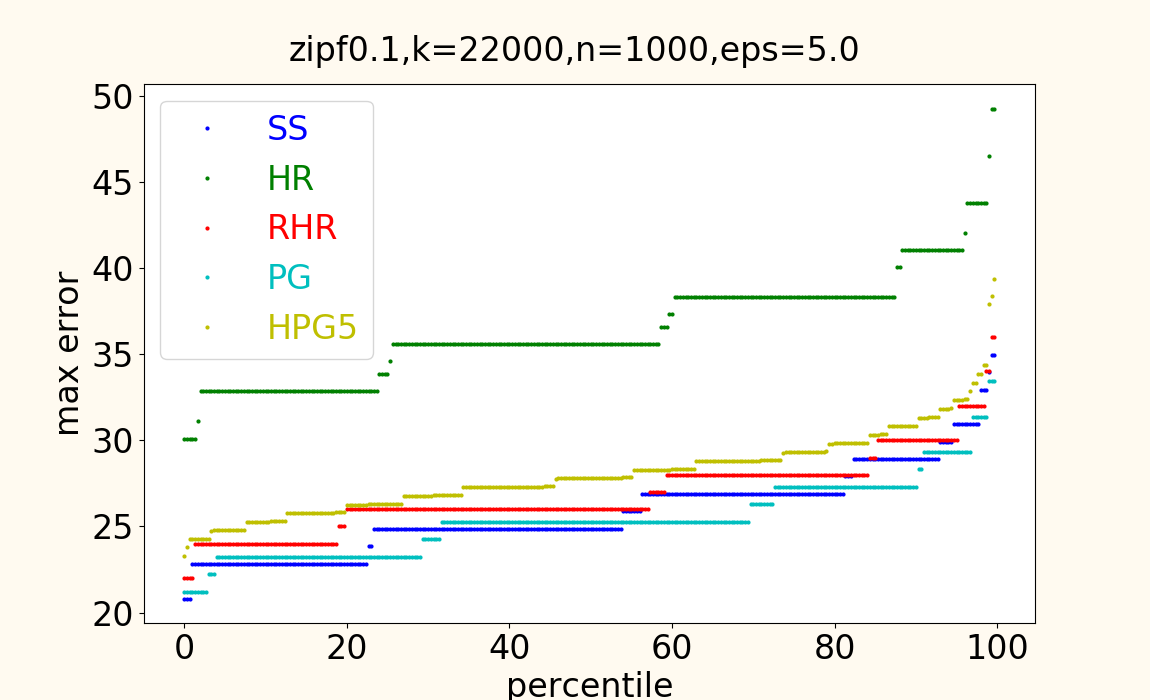}
         \caption{}
         \label{fig:zipf0.1max}
     \end{subfigure}
     \hfill
     \begin{subfigure}[b]{0.45\textwidth}
         \centering
         \includegraphics[width=\textwidth]{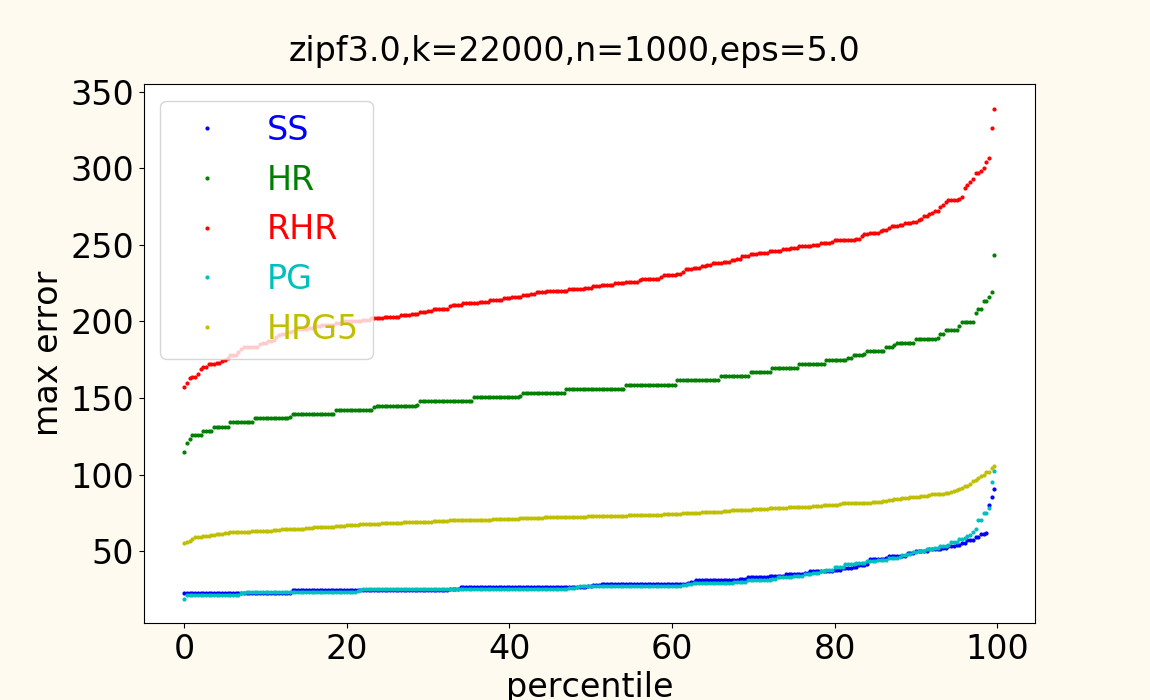}
         \caption{}
         \label{fig:zipf3.0max}
     \end{subfigure}
     \hfill
     \begin{subfigure}[b]{0.45\textwidth}
         \centering
         \includegraphics[width=\textwidth]{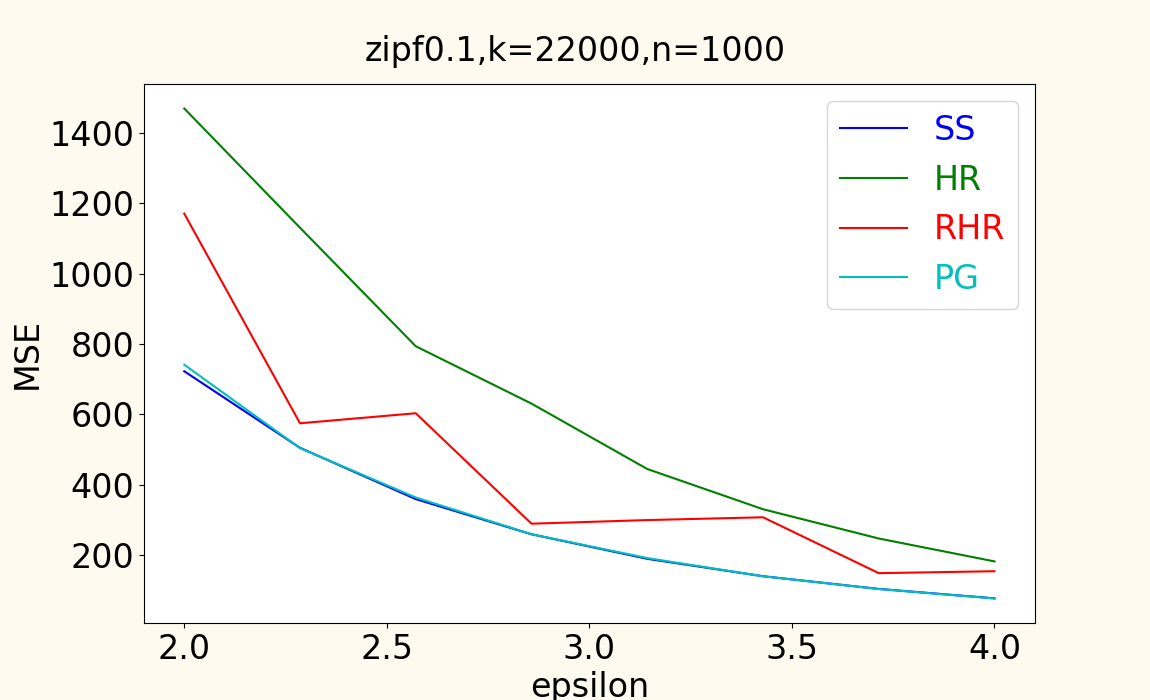}
         \caption{}
         \label{fig:zipf0.1var}
     \end{subfigure}
     \begin{subfigure}[b]{0.45\textwidth}
         \centering
         \includegraphics[width=\textwidth]{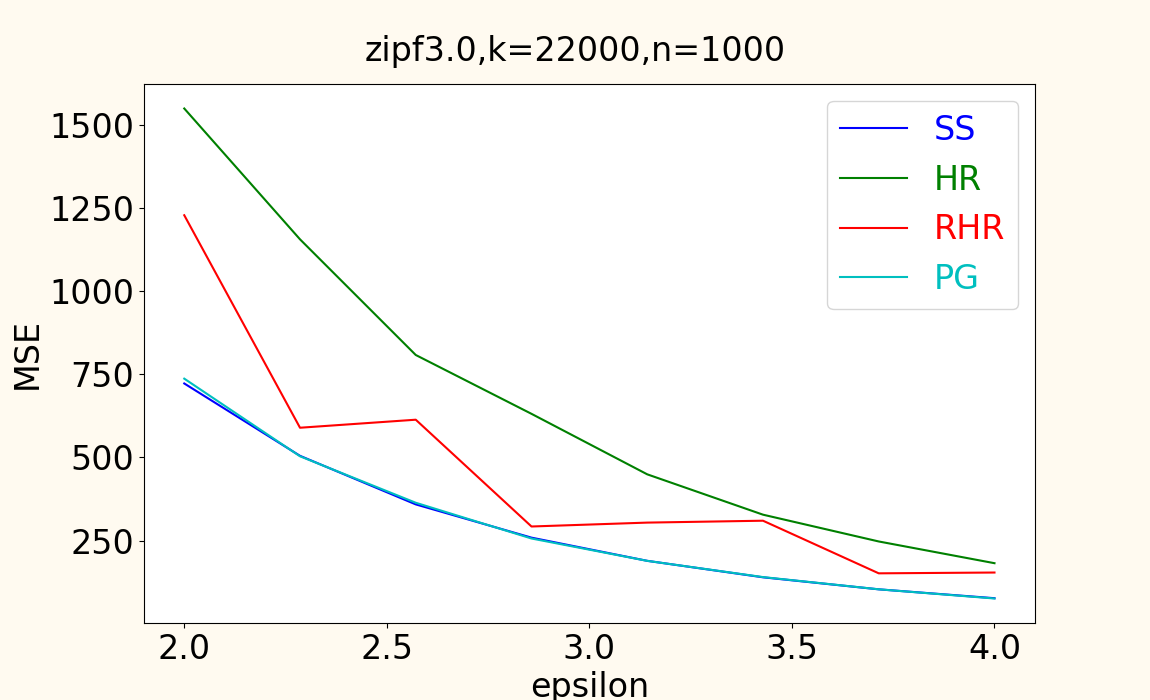}
         \caption{}
         \label{fig:zipf3.0var}
     \end{subfigure}
        \caption{Error distributions from experiments.}
        \label{fig:experiments}
\end{figure*}

Next we discuss error. Many of our experiments showing reconstruction error with fixed $\eps$ take $\eps=5$, a practically relevant setting, and universe size $k = 22{,}000$, for which the closest prime to $\exp(\eps)+1\approx 149.4$ is $q=151$. Recall that \pg rounds up to universe sizes of the form $(q^t-1)/(q-1)$; then $(q^3-1)/(q-1) = 22{,}593$ is not much larger than $k$, so that the runtime of \pg is not severely impacted. Also, $c_{set}/c_{int}$ as defined in \cref{sec:basic-construction} is very close to $\exp(\eps)+1$, so that the MSE bound in \cref{lem:pg-var} nearly matches that of \SS. Furthermore for \hpg for this setting of $\eps, k$, if we choose $q=5, h = 30, t = 5$, then $h\cdot (q^t-1)/(q-1) = 23{,}430$, which is not much bigger than $k$ so that the runtime of \hpg is not majorly impacted. Furthermore $hz$ as defined in \cref{sec:tradeoff} is approximately $150.19$, which is very close to $\exp(\eps)+1$ as recommended by \cref{lem:tradeoff-error} to obtain minimal error. We first draw attention to \cref{fig:spike1mse,fig:spike1var}. These plots run \rappor, \pirappor, \pg, and \SS with $k,n,\eps$ as in the figure and show that their error distributions are essentially equivalent. We show the plots for only one some particular parameter settings, but the picture has looked essentially the same to us regardless of which parameters we have tried. In \cref{fig:spike1mse}, we have $n$ users each holding the same item in the universe (item $0$); we call this a {\it spike} distribution as noted in the plot. We have each user apply its local randomizer to send a message to the server, and we ask the server to then reconstruct the histogram (which should be $(n,0,\ldots,0)$) and calculate the MSE. We repeat this experiment 300 times, and in this plot we have 300 dots plotted per algorithm, where a dot at point $(x,y)$ signifies that the MSE was at most $y$ for $x\%$ of the trial runs; this, it is a plot of the CDF of the empirical error distribution. In \cref{fig:spike1var}, we plot MSE as a function of increasing $\eps$, where for each value of $\eps$ we repeat the above experiment 10 times then plot the average MSE across those 10 trial runs. Because the error performance of \rappor, \pirappor, \SS, and \pg are so similar, in all other plots we do not include \rappor and \pirappor since their runtimes are so slow that doing extensive experiments is very time-consuming computationally (note: our implementation of \rappor requires $O(nk)$ server time, though $O(n(k/e^\eps + 1))$ expected time is possible by having each user transmit only a sparse encoding of the locations of the $1$ bits in its message). We finally draw attention to \cref{fig:zipf0.1mse,fig:zipf3.0mse,fig:zipf0.1max,fig:zipf3.0max,fig:zipf0.1var,fig:zipf3.0var}. Here we run several algorithms where the distribution over the universe amongst the users is Zipfian (a power law), with power law exponent either $0.1$ (an almost flat distribution), or $3.0$ (rapid decay). The $\hpg$ algorithm was run with $q=5$. As can be seen, the qualitative behavior and relative ordering of all the algorithms is essentially unchanged by the Zipf parameter: \pg,\SS always have the best error, followed by \hpg, followed by \rhr and \hr. \cref{fig:zipf0.1mse,fig:zipf3.0mse} show the CDF of the empirical MSE over $300$ independent trials, as discussed above. \cref{fig:zipf0.1max,fig:zipf3.0max} is similar, but the $y$-axis denotes $\|x - \tilde x\|_\infty$ instead of the MSE. \cref{fig:zipf0.1var,fig:zipf3.0var} shows how the MSE varies as $\eps$ is increased; in these last plots we do not include \hpg as one essentially one should select a different $q$ for each $\eps$ carefully to obtain a good tradeoff between runtime and error (as specified by \cref{lem:tradeoff-error}) due to round-up issues in powering $q$. 

\section*{Acknowledgments}
We thank Noga Alon for pointing out the relevance of projective geometry for constructing the type of set system our mechanism relies on.

\newcommand{\etalchar}[1]{$^{#1}$}

\appendix

\section{Fast dynamic programming for PI-RAPPOR}
In this section, we describe an adaptation of our dynamic programming approach to PI-RAPPOR. First, we briefly review the construction of PI-RAPPOR.
We use $\mathbb{F}_{q}$ with the field size $q$ close to $e^{\eps}+1$. Let $t$ be the minimum integer such that $k\le q^{t}$.

We identify the $k$ input values with vectors in $\mathbb{F}_{q}^{t}$. Let 
$x\in \Z^{q^t}$ denote the input frequency vector i.e. $x_{v}$ is the number
of users with input $v\in\mathbb{F}_{q}^{t}$. For each input $v$,
we define a set $S(v)\subset\mathbb{F}_{q}^{t}\times\mathbb{F}_{q}$
where $(a,b)\in S(v)$ if and only if $\left\langle a,v\right\rangle +b=0$.

Each user with input $v$ sends a random element $e$ of $\mathbb{F}_{q}^{t}\times\mathbb{F}_{q}$
with probability $e^{\eps}p$ if $e\in S(v)$ and probability $p$
if $e\not\in S(v)$. Thus, $p=\frac{1}{e^\eps q^t + (q-1)q^t}$. The server keeps
the counts on the received elements in a vector $y$ indexed
by elements of $\mathbb{F}_{q}^{t}\times\mathbb{F}_{q}$. The total
storage is $O\left(q^{t+1}\right)$. We estimate the frequency vector $x$ by computing

\[
\tilde{x}_{v}=\alpha\left(\sum_{u,w:\left\langle u,v\right\rangle +w=0}y_{u,w}\right)+\beta\sum_{u,w}y_{u,w}
\]
where $\alpha$ and $\beta$ are chosen so that this is an unbiased estimator. This condition implies two equations:
\begin{align*}
    \alpha \frac{e^\eps q^t}{e^\eps q^t + (q-1)q^t} + \beta &= 1\\
    \alpha \frac{e^\eps q^{t-1}+(q-1)q^{t-1}}{e^\eps q^t + (q-1)q^t} + \beta &= 0
\end{align*}
We obtain
\begin{align*}
    \alpha &= \frac{e^\eps q + (q-1)q}{(e^\eps-1)(q-1)}\\
    \beta  &= -\frac{e^\eps+(q-1)}{(e^\eps-1)(q-1)}
\end{align*}

Next, we describe a fast algorithm to compute $\tilde{x}$ with running
time $O\left(tq^{t+2}\right)$. Specifically, for $a\in\mathbb{F}_{q}^{j},b\in\mathbb{F}_{q}^{t-j},z\in\mathbb{F}_{q}$,
define

\[
f_j(a,b,z)=\sum_{\substack{pref_j(u)=a\\
\langle\mathrm{suff}_{t-j}(u),b\rangle+w=z\\
\\
}
}y_{u,w},
\]

\noindent where $\mathrm{pref}_{i}(u)$ denotes the length-$i$ prefix
vector of $u$, and $\mathrm{suff}_{i}(u)$ denotes the length-$i$
suffix vector of $u$. Then, we would like to compute

\[
\tilde{x}_{v}=\alpha\left(\sum_{u,w:\left\langle u,v\right\rangle +w=0}y_{u,w}\right)+\beta\sum_{u,w}y_{u,w}=\alpha\sum_{w} f_0(\bot,v,0)+\beta n,
\]

\noindent for all $v\in \mathbb{F}_{q}^{t}$, where $\bot$ denotes the length-$0$
empty vector. We next observe that $f$ satisfies a recurrence relation,
so that we can compute the full array of values $f_0(\bot,v,w)$
efficiently using dynamic programming and then efficiently obtain
$\ensuremath{\tilde{x}\in\mathbb{R}^{k}}$. We have

\begin{align*}
    f_j(a,b,z)&=\sum_{\substack{pref_j(u)=a\\
\langle\mathrm{suff}_{t-j}(u),b\rangle+w=z\\
\\
}
}y_{u,w}\\
&= \sum_{i=0}^{q-1}\sum_{\substack{pref_{j+1}(u)=a\circ i\\
\langle\mathrm{suff}_{t-j-1}(u),\mathrm{suff}_{t-j-1}(b)\rangle+w=z-i\cdot b_1 \pmod q\\
\\
}
}y_{u,w}\\
&= \sum_{i=0}^{q-1} f_{j+1} (a\circ i, \mathrm{suff}_{t-j-1}(b), (z-i\cdot b_1) \bmod q)
\end{align*}

Note that we have the base cases $f_t(a,\bot,w)=y_{a,w}$. We need to compute the values of $f_j(a,b,z)$ for $j\in\{0,1\ldots,t-1\},a\in\mathbb{F}_{q}^{j},b\in\mathbb{F}_{q}^{t-j},z\in\mathbb{F}_{q}$ and each value takes $O(q)$ time so the total running time is $O(tq^{t+2})$.

\section{The public coin setting}\label{sec:pub-coin}
We show that versions of \pg and \hpg can be implemented in the public coin setting in a way that the communication is $\lceil \log_2 q\rceil = \eps \log_2 e + O(1)$ bits, which is asymptotically optimal to achieve asymptotically optimal utility loss \cite[Corollary 7]{BarnesHO20}. We begin with \pg.

Recall that as described, \pg associates each of the $k$ input values with a canonical vector in $\F_q^t$. In the public coin variant we now describe, we further assume that the canonical vectors have a non-zero last coordinate. This can be ensured by picking $q,t$ such that $k \leq 1 + (1-1/q)((q^t-1)/(q-1)-1) = q^{t-1}$. We will use $C_{q,t}$ to denote the set of canonical vectors in $\F_q^t$ and $C^*_{q,t}$ to denote those with a non-zero last coordinate.

With this setup, recall that each output in the set $S_v$ can be associated with a vector $u \in C_{q,t}$ such that $\left\langle u,v\right\rangle =0$. Thus a user with input $v$ sends a vector $u \in C_{q,t}$ with probability $e^{\eps}p$ if $\left\langle u,v\right\rangle =0$ and with probability $p$ otherwise. For a vector $u$, let $pref_{t-1}(u)$ denote its length $(t-1)$ prefix. Note that for a vector $u\in C_{q,t}$, either $pref_{t-1}(u)$ is itself a canonical vector in $C_{q,t-1}$, or $u=u^* \stackrel{def}{=} (0,\ldots,0,1)$. Also note that for any $v \in C^*_{q,t}$, $u^* \neq S_v$.

This then suggests the following algorithm. We use public randomness to select a vector $w \in \F_q^{t-1}$ such that $w=(0,\ldots,0)$ with probability $p$, and $w$ is a random vector in $C_{q,t-1}$ otherwise. Thus there are $1+\frac{q^{t-1}-1}{q-1}$ possible values of $w$. Given a $w \in C_{q,t-1}$ and a $v \in C^*_{q,t}$, there is a unique $a \in \F_q$ such that $\left\langle v,w\cdot a\right\rangle = 0\mod q$. When $w \neq (0,\ldots,0)$, a user with input $v \in C^*_{q,t}$ sends message $a$ with probability $\frac{e^{\eps}}{e^{\eps}+q-1}$ if $\left\langle v,w\cdot a\right\rangle = 0\mod q$, and with probability $\frac{1}{e^{\eps}+q-1}$ otherwise. If $w=(0,\ldots,0)$, the user always send $1$.

The server given $w$ derived from the shared public randomness, and the message $a \in \F_q$, decodes it as
\begin{align*}
    Dec(w, a) &= w \cdot a.
\end{align*}
We claim that the distribution of $Dec(w, a)$ is identical to the output in the private coin \pg. First observe that by construction, $Dec(w, a) \in C_{q, t}$. Next notice that for any $u, u' \in S(v)$, we have
\begin{align*}
    \Pr(Dec(w,a)=u) &= \Pr(w = pref_{t-1}(u)) \cdot \Pr(a=u_t \mid w=pref_{t-1}(u))\\
    &= \Pr(w = pref_{t-1}(u)) \cdot\frac{e^{\eps}}{e^{\eps}+q-1}\\
    &= \Pr(w = pref_{t-1}(u')) \cdot\frac{e^{\eps}}{e^{\eps}+q-1}\;\;\;\;\; \mbox{(by uniformity of w over canonical vectors)}\\
    &= \Pr(w = pref_{t-1}(u')) \cdot \Pr(a=u'_t \mid w=pref_{t-1}(u'))\\
    &= \Pr(Dec(w,a)=u').
\end{align*}
Similarly, for any $u, u' \in C_{q,t} \setminus S_v$ such that $u, u' \neq u^*$, we can write
\begin{align*}
    \Pr(Dec(w,a)=u) &= \Pr(w = pref_{t-1}(u)) \cdot \Pr(a=u_t \mid w=pref_{t-1}(u))\\
    &= \Pr(w = pref_{t-1}(u)) \cdot\frac{1}{e^{\eps}+q-1}\\
    &= \Pr(w = pref_{t-1}(u')) \cdot \frac{1}{e^{\eps}+q-1}\;\;\;\;\; \mbox{(by uniformity of w over canonical vectors)}\\
    &= \Pr(w = pref_{t-1}(u')) \cdot \Pr(a=u'_t \mid w=pref_{t-1}(u'))\\
    &= \Pr(Dec(w,a)=u').
\end{align*}
Further, an identical calculation shows that for $u\in S_v, u' \in C_{q,t}\setminus S_v$ with $u' \neq u^*$,  $\Pr[Dec(w,a)=u] = e^{\eps} \cdot  \Pr(Dec(w,a)=u')$. Moreover, the distribution of $w$ ensures that $\Pr(Dec(w,a) = u^*) = p$. It follows that for all $u\in C_{q,t}$,  $\Pr(Dec(w,a)=u)$ is $e^{\eps}p$ if $u \in S_v$ and $p$ if $u \in C_{q,t} \setminus S_v$.

In other words, we have shown how to simulate the output distribution of \pg in the public coin setting while sending only a single element from $\F_q$.

An implementation of \hpg in the public coin model is similar. A message in \hpg is a pair $(j,u)$ where $j\in\{1,\ldots,h\}$ is the index of a block, and $u\in\mathbb F_q^t$ is the name of a canonical vector, and as above in the public coin setting we will forbid $u$ from being the all-zeroes vector (so that now we need $hq^{t-1} \ge k$). As described in \cref{sec:tradeoff}, $h,q$ are chosen so that $hq \approx e^\eps + 1$. In the public coin model, the user selects $j$ using private randomness and sends it explicitly then uses the \pg public coin protocol described above to determine the first $t-1$ entries of $u$ with no communication required, then sends the final entry of $u$ to obey the \hpg distribution. The total communication is $\lceil hq\rceil = \eps\log_2 e + O(1)$ bits.

\end{document}